\DeclareMathOperator*{\argmin}{arg\,min}
\DeclareMathOperator*{\limisup}{lim\,sup}
\DeclareMathOperator*{\amin}{\mbox{minimize}}
\newtheorem{theorem}{Theorem}%[section]
\newtheorem{corollary}{Corollary}[theorem]
\newtheorem{propsn}{Proposition}
\pgfplotsset{compat=1.16}
\title{}
\author{}
\date{}
\title{
%Real-time Tracking in an Energy Harvesting System with Imperfect Feedback channel
Real-Time Tracking in a Status Update System with an Imperfect Feedback Channel
%Real-time Tracking in an Energy Harvesting status update System with Imperfect Feedback
%Goal-oriented Transmission Scheduling in a status update System under Imperfect Feedback
%Goal-oriented Transmission Scheduling in a status update System with Imperfect Feedback channel
}
\author{Saeid Sadeghi Vilni, Abolfazl Zakeri, Mohammad Moltafet, and Marian~Codreanu% <-this % stops a space
\thanks{This research has been financially supported by the Infotech Oulu, the Research Council of Finland (grant 323698), and 6G Flagship program (grant 346208). M. Codreanu has also been financially supported in part by the Swedish Research Council (grant 2022-03664).\\
Saeid Sadeghi Vilni and Abolfazl Zakeri are with the CWC-RT, University of Oulu, 90014 Oulu, Finland (e-mail: Saeid.SadeghiVilni@oulu.fi; Abolfazl.Zakeri@oulu.fi). \\
Mohammad Moltafet is with the Department of Electrical and Computer Engineering, University of California at Santa Cruz, Santa Cruz, CA 95064 USA (e-mail: mmoltafe@ucsc.edu).\\
Marian Codreanu is with the Department of Science and Technology, Link\"{o}ping University, Link\"{o}ping 58183, Sweden (e-mail: marian.codreanu@liu.se).\\ \indent
The paper's preliminary results were presented at Asilomar 2024 \cite{ssvconfasl}.}}
\begin{document}

\maketitle
%\vspace{-10mm}
\begin{abstract}
    We consider a status update system consisting of a finite-state Markov source, an energy-harvesting-enabled transmitter, and a sink. The forward and feedback channels % \red{between the transmitter and the sink} 
    are error-prone. We study the problem of minimizing the long-term time average of a (generic) distortion function subject to an energy limitation. % \red{imposed by the energy harvesting circuit}. 
    Since the feedback channel is error-prone, the transmitter has only partial knowledge about the transmission results and, consequently, about the estimate of the source state at the sink. Therefore, we model the problem as a partially observable Markov decision process (POMDP), which is then cast as a belief-MDP problem.
    The infinite belief-state space makes solving the belief-MDP difficult. Thus, by exploiting a specific property of the belief evolution, we truncate the state space and formulate a finite-state MDP problem, which is then solved using the relative value iteration algorithm (RVIA). 
    Furthermore, we propose an energy-agnostic low-complexity policy in which the belief-MDP problem is transformed into a sequence of per-slot optimization problems. Then, the energy-agnostic low-complexity policy is extended to an energy-aware low-complexity policy by adding a regularization term to the objective function of the per-slot problems.
    {Simulation results show the structure and effectiveness of the proposed policies and their superiority compared to baseline policies.}
    %\red{Simulation results show the effectiveness of the proposed policies and their superiority compared to baseline policies. Moreover, we numerically show that the proposed policies have switching-type structures.}
    %\blue{Simulation results show the proposed policies' effectiveness and superiority over baseline policies while revealing their switching-type structures.}
    
   \textbf{Index Terms:} Real-time tracking, status updating, energy harvesting, partially observable Markov decision process. 
\end{abstract}
\section{Introduction}%Real-time sensing, monitoring, and updating of physical phenomena are the main components of cyber-physical systems and Internet of Things (IoT) applications such as digital health, intelligent transportation, and disaster monitoring.
Emerging applications in the Internet of Things (IoT) networks, such as digital health, intelligent transportation, and disaster monitoring, often rely on real-time tracking of a remotely monitored random process. %, enabling informed decision-making and swift response in time-sensitive scenarios.
In these systems, a sensor observes a physical process such as road traffic, patient heart rate, or temperature, and transmits real-time status updates to an intended destination, such as a remote controller or a decision-making entity \cite{uysal2022semantic}. %\red{reference alternation}

A commonly used metric to quantify the information freshness in status update systems is the age of information (AoI) \cite{aoi1,ya1}. AoI is defined as the difference between the current time and the generation time of the last received status update at the destination \cite{aoi1,ya1}. Extensive research has been conducted on the AoI analysis and optimization in different areas, e.g., queuing systems \cite{ry2}, scheduling \cite{vilni2023aoi}, and sampling problems \cite{mmacm,zakeri2023minimizing}. However, AoI solely focuses on the time elapsed since the last received update's generation, neglecting the updates' content and the mismatch between the current value of the process of interest and its estimate at the destination \cite{salimnejad2024real,pp_arxiv,maatouk2022age, sun2019sampling}. 
%\blue{This motivated us to address the problem of optimizing (generic) distortion measures for real-time tracking in status update systems. The distortion function quantifies discrepancies between the monitored process and its estimate at the sink, based on its impact on the application's objective. Notably, the distortion here can be tailored to error indicator, absolute error, or squared error.}
This motivated us to address the problem of optimizing (generic) distortion measures for real-time tracking in status update systems. {The distortion function is meant to capture the discrepancy between the monitored process and its estimate at the sink relative to the application's goal. {Notably, the distortion here can be tailored to error indicator \cite{maatouk2020age}, absolute error\cite{wang2019whittle}, or squared error \cite{huang2020real}.}
%It can be defined by various bounded functions, such as error indicator, absolute error, or squared error, depending on the application's goal.
}

{To make optimal decisions in a status update system, the controller must know whether the sent packet was successfully delivered.
To this end, the receiver is required to send acknowledgment feedback to the controller through a wireless channel upon a status update delivery. 
Ensuring an error-free feedback channel is challenging due to noise, interference, and fading.
To address this challenge, we study a status update system with an {imperfect} feedback channel. 
}
%\blue{Optimal sampling and scheduling in real-time tracking systems heavily rely on ACK feedback from the sink to the transmitter. This feedback allows the transmitter to accurately determine the estimate at the sink, ensuring efficient and effective status updating. Achieving error-free feedback however is not practically guaranteed owing to noise, interference, and fluctuations in the wireless communication channels. To accurately account for this, we study a real-time tracking system where the feedback is \textit{imperfect}. Notice that optimizing and analyzing a real-time tracking system under imperfect feedback is more challenging due to the uncertainty at the transmitter.+++

%Traditional batteries struggle to keep pace with the expanding footprint of IoT devices and wireless sensor networks. Energy harvesting offers a sustainable and cost-effective solution for the status update systems.%With the pace of the growing demand for IoT and wireless sensor networks, providing a sustainable power supply is an essential part of a status update system. 
%The rigorous constraints on energy availability for battery-powered sensors direct the attention to energy harvesting (EH) powered sensors \cite{abd2019role}.
Furthermore, as the IoT and wireless sensor networks expand, traditional battery-powered sensors struggle to meet the growing demand for sustainability and low maintenance costs. Energy harvesting (EH) technology offers a promising alternative, allowing sensors to draw power from their environment, eliminating battery replacements, and significantly extending their lifespans \cite{abd2019role}.
%For instance, EH sensors can utilize sunlight, vibrations, thermal energy, or even ambient radio waves to harvest energy \cite{abd2019role}. %These sensors can harvest energy from environmental sources such as sun, heat, and ambient radio frequency \cite{abd2019role}.

In this work, we consider a status update system where an EH-enabled transmitter monitors a source and sends status updates to a sink over an unreliable wireless channel (see Fig. \ref{fsm}). The source is modeled as a finite-state Markov chain. The transmitter is equipped with a battery with limited capacity that is charged by harvesting energy from the environment. We assume a slotted communication system in which, after each successful reception, the sink sends an acknowledgment signal to the transmitter over an \textit{imperfect feedback channel}. %\red{Our considered system may model a part of intelligent transportation systems in which an energy-harvesting roadside unit monitors the status of the vehicle flows and sends status updates to a traffic management center over a wireless communication channel \cite{atoui2018offline,yang2020analytical}.}

{The considered system model may model status updating in an intelligent transportation system that uses EH-based roadside units \cite{atoui2018offline,yang2020analytical,gimenez2024energy}. These units harvest renewable energy sources, such as solar or wind energy, to sustain their operations. They monitor critical parameters, such as traffic flow and vehicular density, and transmit the collected data to a traffic management center via a wireless communication channel. Given the nature of such applications, the destination is interested in the real-time tracking of the monitored parameters \cite{atoui2018offline,yang2020analytical,gimenez2024energy}}.

We address a real-time tracking problem, where the goal is to minimize the expected time average of a (generic) distortion function subject to an energy limitation constraint imposed by the energy harvesting circuit. %The distortion function measures the discrepancy between the source state and its estimated value at the sink (e.g., the absolute value of the difference between the source state and its estimated value at the sink or an indicator function representing the mismatch between the source and the sink). 
The solution to the problem determines the optimal transmission times. Since the feedback channel is error-prone, some of the acknowledgments may not be detected at the transmitter. Thus, the sink's estimate of the source is only partially observable at the transmitter side. Thus, we model our problem as a partially observable Markov decision process (POMDP), and subsequently, we cast it as a belief-MDP problem. Due to infinite belief space, solving the original belief-MDP problem is difficult. Thus, we truncate the state space by exploiting a specific property of the belief evolution. By truncation, we formulate a finite state belief-MDP problem, which is then solved using the relative value iteration algorithm (RVIA). Since the complexity of the POMDP-based policy increases with the number of source states and the battery's capacity, we propose two sub-optimal low-complexity (LC) transmission policies, namely: 1) energy-agnostic LC policy and 2) energy-aware LC policy. The main idea behind the energy-agnostic LC policy is that instead of solving the original belief-MDP problem, the problem is transformed into a sequence of per-slot optimization problems. When the transmitter has both options of transmitting and staying idle, the energy-agnostic LC policy makes decisions independently of the energy arrival process, while the energy arrival process affects the long-term energy availability for the transmitter's operations. Thus, by adding a regularization term to the objective function of the per-slot problems, we propose the energy-aware LC policy. The regularization term is based on the energy arrival process and promotes less transmission to save energy for future use. Specifically, the regularization term decreases the transmission rate when the energy arrival rate decreases, and its effect decreases when the energy arrival rate increases.
In the numerical results, we study the effectiveness of the proposed policies and compare the proposed policies with a baseline policy. Moreover, we study the structure of proposed policies, which reveals their switching-type structure.

The main contributions of our paper are summarized as follows:
\begin{itemize}
    \item We address a real-time tracking problem in a status update system with an \textit{imperfect feedback} channel. We investigate the problem of minimizing an average distortion function under an energy constraint imposed by the energy harvesting circuit.
    \item We model the main problem as a POMDP and propose three transmission policies: 1) a deterministic transmission policy by truncating the belief space and then applying the RVIA, 2) a sub-optimal energy-agnostic LC policy by transforming the belief-MDP problem into a sequence of per-slot problems, and 3) a sub-optimal energy-aware LC policy that takes the energy arrival rate into consideration.
    %by 
    %adding a regularization term to the problem of the energy-agnostic LC policy.
    \item We numerically evaluate the proposed policies' effectiveness under different settings. We also investigate the structure of the proposed policies and show their switching-type structure. %The results show that the proposed POMD-based policy outperforms the baseline policy. 
\end{itemize}
\subsection{Related Work}
Real-time tracking and remote state estimation have recently witnessed great interest in numerous studies, e.g., \cite{huang2020real,tsai2021unifying,zhang2023real,zhang2020error,sun2019sampling,ornee2021sampling,nadeem2022harq,wang2019whittle,wang2020real,ornee2023whittle,tang2022sampling,sun2022status,wang2020timely,yun2019information,zakeri2023optimal,nayyar2013optimal}.
The work \cite{wang2019whittle} addressed the problem of remote estimation in a multi-source status update system. Using the Whittle index theory, they proposed a heuristic channel allocation policy. This policy aims to find an optimal tradeoff between the total discounted estimation error and the energy consumption over the infinite horizon.
The authors of \cite{huang2020real} considered a hybrid automatic repeat request (HARQ) based real-time remote estimation framework for a status update system. They proposed an optimal scheduling policy to minimize the long-term mean squared error (MSE).
The work \cite{tsai2021unifying} considered a two-way delay status update system where the controller is at the sink and the source is modeled as the Wiener process. They developed a jointly optimal sensor/controller waiting policy to minimize estimation error from the sensor perspective and AoI from the controller perspective.
In \cite{zhang2023real}, the authors proposed an optimal quantization strategy for real-time monitoring of an Ornstein-Uhlenbeck (OU) process in a status update system. They minimized the time-average MSE subject to an AoI outage constraint.
The authors of \cite{zhang2020error} studied a status update system with temporally and spatially correlated source states. They used a time-varying Gauss-Markov random field to model the source state and derived the closed-form expression of the average estimation error.
The work \cite{sun2019sampling} studied a status update system where the source is modeled as a Wiener process and proposed an optimal online sampling strategy to minimize the MSE. They studied the connection between the policy that minimizes the AoI and the policy that minimizes the remote estimation error.
In \cite{ornee2021sampling}, the authors generalized the work in \cite{sun2019sampling} by considering a source modeled as OU process. They showed that the optimal sampling policy has a switching-type structure.
%The authors of \cite{guo2021scheduling} studied real-time estimation in a multi-source status update system. The sources are independent and modeled with linear Gaussian processes. They proposed a scheduling policy which determines
The authors of \cite{nadeem2022harq} investigated the real-time remote estimation in a status update system with a HARQ protocol and different retransmission policies. They proposed an optimal transmission control policy to minimize the average MSE and a variation of MSE.
In \cite{wang2020real}, the authors studied the real-time reconstruction problem of a continuous process at a remote destination. They derived the closed-form expression of the average distortion between the original and the estimated process. Moreover, they found the optimal sampling rate by minimizing the derived average distortion.
The authors of \cite{ornee2023whittle} studied the problem of minimizing the weighted sum of the expected time-average estimation errors of sources in a status update system. They proposed a sampling and transmission scheduling policy to solve the problem using the Whittle index theory.
The work \cite{tang2022sampling} studied the problem of remote estimation of a source modeled as the Wiener process over a channel with unknown delay statistics. They proposed an online sampling policy to minimize the expected time average MSE under a sampling frequency constraint.
In \cite{sun2022status}, the authors derived a closed-form expression for the average MSE under two different scheduling policies in a multi-source status update system.
The authors of \cite{wang2020timely} considered a status update system consisting of a vehicle as the source and a roadside unit as the controller. They proposed a scheduling policy to minimize the average weighted estimation error under a resource constraint.
The work \cite{yun2019information} considered a multi-source status update system and addressed the problem of joint estimation error and transmission cost minimization. They considered two threshold-based policies and evaluated them numerically.
The work \cite{zakeri2023optimal} considered an energy harvesting status update system where the source is only partially observable. They proposed several scheduling policies to minimize average distortion and age of incorrect information.

%a similar real-time tracking system yet with the following fundamental differences. The authors in \cite{zakeri2023optimal} considered perfect feedback, whereas we consider imperfect feedback, which makes the sink only partially observable to the transmitter. In addition, they provided the results for a binary Markov source, while we consider a multi-state source.

The most related works to this paper are \cite{nayyar2013optimal,rezasoltani2021real}. %,arafa2021timely,feng2021age}. %In \cite{nayyar2013optimal}, the authors studied a status update system in which the source dynamic is molded via a Markov chain and a multidimensional linear Gaussian system. The transmitter harvests energy for operating, and the transmission channel is perfect. They proposed a scheduling policy and estimation strategy by solving the problem of jointly minimizing a distortion function and transmission cost for a finite horizon.
Similar to our work, the authors of \cite{nayyar2013optimal}
studied a status update system in which the source dynamic is molded via a Markov chain, and the transmitter harvests energy for operation. They considered a perfect communication channel and solved the real-time estimation problem for a finite time horizon. Different from \cite{nayyar2013optimal}, we consider imperfect forward and feedback channels and a long-term average-time problem, which makes our problem more complicated such that the solution of \cite{nayyar2013optimal} is not applicable to solve it. Notably, due to imperfect channels, status updates are not delivered for all transmission attempts, and the transmitter does not always know whether the sent status updates are successfully received.
{In \cite{rezasoltani2021real,jin2024effect}, the authors considered a status update system with an imperfect feedback channel. However, in \cite{rezasoltani2021real,jin2024effect}, the authors studied a different problem; we solve a control optimization problem to minimize the average distortion function whereas they characterized a closed-form expression of the average AoI under a fixed policy. Moreover, we consider an energy-harvesting powered transmitter, which makes the system more complicated as the transmitter does not know the availability of the energy at the subsequent slots. }

%the work \cite{rezasoltani2021real} studied the AoI in the system, which does not consider the sample's value, whereas we study a real-time tracking problem. %Deriving the scheduling policy for a real-time tracking problem is more complicated than an AoI problem since the content of the source should also be considered. 

% \subsection{Organization}
% The rest of this paper is organized as follows. The system
% model and problem formulation are presented in Section \ref{sysm}.
% The proposed policies are presented in Section \ref{policy}. Numerical results are presented in Section \ref{snrs}. Finally, concluding remarks
% are made in Section \ref{clc}.

\section{System Model and Problem Formulation}\label{sysm}
\subsection{System Model}\label{sysmA}
We consider a real-time tracking system consisting of a source, an energy-harvesting powered transmitter, and a sink, as shown in Fig. \ref{fsm}. %The source monitors a stochastic process and generates a sample at the transmitter's request.
%The transmitter monitors the source and sends an status update.
%The source represents a stochastic process.
The transmitter monitors a stochastic process and sends status updates over an error-prone channel. %upon initiating a transmission. The forward and feedback channels are error-prone.
%An energy harvesting module powers the transmitter and stores the energy in a battery.
The transmitter is powered by an energy harvesting module. The harvested energy is stored in a battery with a capacity of $B$ units of energy. 
The sink is interested in real-time tracking of the source and, upon receiving an update, sends an acknowledgment to the transmitter over an error-prone channel. The time is discrete with unit time slots ${t \in \{1, 2, . . .\}}$.
% \begin{figure}
%     \centering
%     \includegraphics[width = 8.5cm]{Image/MXSMSM_new.pdf}
%     \caption{The considered system model.}
%     \label{fsm}
% \end{figure}
\begin{figure}
    \centering
    \includegraphics[width = 8.5cm]{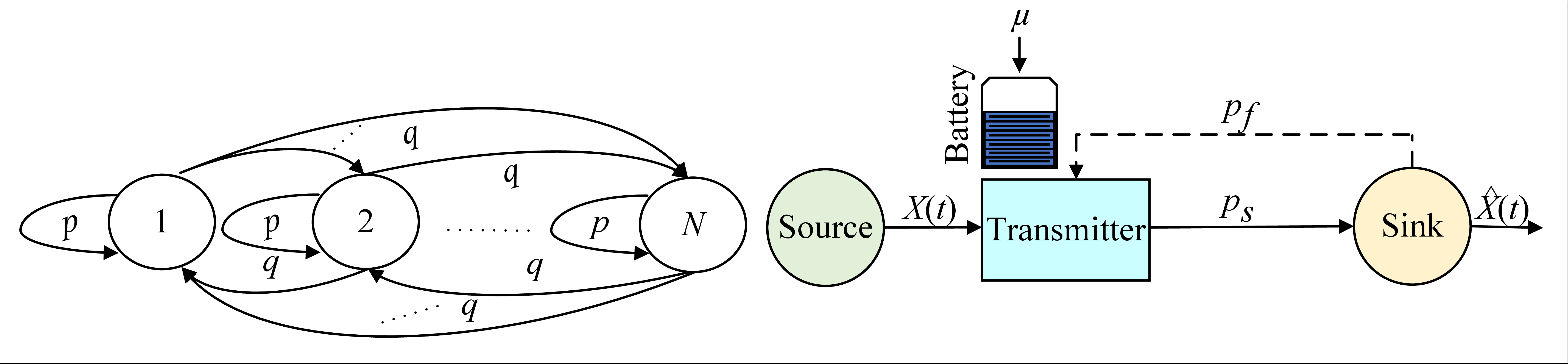}
    \caption{{The system model with the considered source model.}}
    \label{fsm}
\end{figure}

The source is modeled as a finite-state symmetric discrete-time Markov chain\footnote{This is a commonly used source model (see e.g., \cite{nayyar2013optimal,chakravorty2019remote,poojary2019real}).} and is fully observable by the transmitter. Let ${X(t)\in\{1,\ldots,N\}}$ denote the source state at slot $t$, where $N$ is the number of source states. Each slot, the source either remains in the same state with a probability $p$ or transits\footnote{We assume that the source and the system clocks are synchronized.} to any other state with a probability $ q = (1-p)/(N-1)$, as shown in Fig. \ref{fsm}. % and $X(t)=n$ means the source is at state $n$. 
% Let $P$ denote the state transition matrix of the source, given as
%     \begin{align}
%         %\begin{array}{ll}
%              P = \begin{bmatrix}
%                     p & q&\dots &q\\
%                     q & p& & \vdots\\
%                     \vdots & &\ddots & q\\
%                     q &\dots & q& p
%                 \end{bmatrix}.
%         %\end{array}
%     \end{align}
% Note that $P$ is a doubly stochastic matrix, i.e., a square matrix with non-negative entries between $0$ and $1$, where each row and column sums up to $1$. 
% \begin{figure}
%         \centering
%         \includegraphics[width = 8.5cm]{Image/MultiStateSource.pdf}
%         \caption{The source Model}
%         \label{source}
% \end{figure}

% The capacity of the transmitter's battery is $B$ units of energy. 
% The battery's energy arrival is modeled as the Bernoulli process with parameter $\mu$.
% Let $e(t)\in\{0,1\}$ denote the energy arrival process at slot $t$, i.e., $ \mathrm{Pr}\left(e(t)=1\right)=\mu$. Let $b(t)\in\{0,1,\dots, B\}$ denote the energy level of the battery at slot $t$.

% MC I suggest replacing the above sentences by:

%Let $a(t)\in\{0,1\}$ denote the transmission decision by the transmitter at slot $t$, where $a(t) = 1$ indicates that the transmitter sends a status update to the sink and $a(t)=0$ otherwise. Once decided by the transmitter, the transmission takes place at the beginning of the slot, right after the source state transition. The transmitter needs one unit of energy for transmitting the status update.

We denote the energy level in the battery at the beginning of slot $t$ by $b(t)\in\{0,1,\dots, B\}$.  Let  $e(t)\in\{0,1\}$ be the energy arrival process during slot $t$. We model the energy arrival as a Bernoulli process with parameter $\mu$, i.e., $\mu = \mathrm{Pr}\left(e(t)=1\right)$. Let $a(t)\in\{0,1\}$ denote the transmission decision at slot $t$, where $a(t) = 1$ indicates that the transmitter sends a status update %to the sink 
and $a(t)=0$ otherwise. The transmitter spends one unit of energy for each status update transmission attempt. Therefore, the battery's energy level $b(t)$ evolves as
\begin{align}
    %\begin{array}{ll}
        &b(t+1) = \min\{b(t)+e(t)-a(t),B\}.
    %\end{array}
\end{align}

The transmitter sends the status update over an error-prone wireless channel. We denote the probability of successful reception by $p_s$. After each successful reception, the sink sends an acknowledgment (ACK) to the transmitter. We assume that the feedback channel is delay-free but error-prone, i.e., ACK is either received instantaneously or lost. Let $p_f$ denote the probability of a successful ACK reception over the feedback channel. Let $f(t)\in\{0,1\}$ be the binary indicator of the reception of ACK at slot $t$; $f(t)=1$ indicates that the transmitter receives an ACK signal after a transmission attempt, and $f(t)=0$ indicates that the transmitter does not receive an ACK due to either an error over the forward channel (i.e., the status update is not received by the sink) or due to an error over the feedback channel (i.e., status update is correctly received at the sink but the ACK is lost). We also define $f(t) = 0$ for the slots where the transmitter sends no status update. %receive any feedback after a transmission attempt or it does not attempt a transmission.%, $f(t)=1$ indicates that the transmitter receives an ACK signal.

The sink estimates the source state based on the status updates received from the transmitter. Let $\hat X(t)$ denote the estimated state of the source at slot $t$. We assume that the sink uses a maximum likelihood (ML) estimator. %the most likely state approach for estimation. % a maximum likelihood (ML) estimator. %The relation between $p$ and $q$ determines the optimal estimation strategy for the sink. 
Without loss of generality, we consider the case $p> q$. In this case, Proposition~\ref{pro1} below shows that the ML source-estimator at the sink is given by the last successfully received status update.
% and utilize the sink's optimal ML estimator for this setting. %\footnote{Since the optimal estimation strategy is different for the case $p<q$, the results for this case is identical.}. %Note that even though for $p<q$, the sink requires a different estimation strategy, and the estimation strategy is independent of the real-time tracking problem.
%In the following proposition, we show that, when $p>q$,  the optimal ML estimator of the source at the sink is the last successfully received status update.
% is that the sink does not change its state until it receives a new update.
\begin{propsn}\label{pro1}
    %The optimal \red{MC: I don't see what optimal means here} 
    {Using the ML estimator, the estimate of the source state at the sink when $p > q$ is given as}
\begin{align}
 &\hat X(t+1) =
        &\begin{cases}
            X(t) & a(t)=1 ~\text{ {and the status update}}\\ &\text{{was successfully received}}\\
            \hat X(t)  & \text{{otherwise}}.
        \end{cases}
\end{align}
\end{propsn}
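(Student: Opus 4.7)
The plan is to establish that the maximum-likelihood estimate of $X(t+1)$ at the sink collapses onto the source value carried by the most recently received status update. First I would identify the sufficient statistic: let $\tau\leq t$ denote the most recent slot at which the sink successfully received an update, and set $x=X(\tau)$. By the Markov property of the source, the conditional distribution of $X(t+1)$ given the whole history of successfully received updates reduces to its conditional distribution given $X(\tau)$ alone, so the ML estimate takes the form
$$\hat X(t+1)=\argmax_{x'\in\{1,\dots,N\}}\Pr\!\left(X(t+1)=x'\mid X(\tau)=x\right).$$
Writing $k=t+1-\tau\geq 1$, the proposition reduces to verifying that this argmax is attained uniquely at $x'=x$.

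Next I would exploit the symmetric structure of the transition matrix $P$. Using $P=(p-q)I+qJ$ with $J=\mathbf{1}\mathbf{1}^{\mathsf T}$ and the identity $J^{2}=NJ$, a direct spectral computation yields
$$P^{k}=(p-q)^{k}I+\frac{1-(p-q)^{k}}{N}J,$$
so $P^{k}$ again has constant diagonal and constant off-diagonal entries, differing by exactly $(p-q)^{k}$. Because $p>q$ forces $p-q\in(0,1]$, this gap is strictly positive for every $k\geq 1$; hence the diagonal entry of $P^{k}$ strictly dominates every off-diagonal entry in the same row, and the argmax above is attained uniquely at $x'=x$. Splitting on whether a new update was just received yields the two cases in the statement: if $a(t)=1$ and the packet was correctly decoded then $\tau=t$ and $\hat X(t+1)=X(t)$; otherwise $\tau$ is unchanged from the previous slot and $\hat X(t+1)=\hat X(t)$.

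The main obstacle I anticipate is making the reduction to a single sufficient statistic rigorous in the presence of the imperfect feedback channel. Feedback errors corrupt the \emph{transmitter's} belief about the sink, but on the sink's side the likelihood depends only on the contents and slot indices of the received packets; once the Markov property is applied these collapse to $X(\tau)$. The remaining step is the elementary spectral decomposition of $P$ recorded above, with the uniform positivity of $(p-q)^{k}$ over $k\geq 1$ carrying the argument.
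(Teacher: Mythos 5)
Your proof is correct, and it reaches the conclusion by a genuinely different route than the paper. The paper argues by induction on the number of transitions $K$: it writes $P^{K}$ as a matrix with constant diagonal $R_{K,1}$ and constant off-diagonal $R_{K,2}$, computes the base case $R_{1,1}-R_{1,2}=(p-q)^{2}\ge 0$, and propagates the inequality via the recursion $R_{K,1}-R_{K,2}=(p-q)\bigl(R_{K-1,1}-R_{K-1,2}\bigr)$. You instead obtain the exact closed form $P^{k}=(p-q)^{k}I+\frac{1-(p-q)^{k}}{N}J$ from the decomposition $P=(p-q)I+qJ$ (using $p+(N-1)q=1$ so that the eigenvalue along $J$ is $1$), which identifies the diagonal--off-diagonal gap as exactly $(p-q)^{k}$ rather than merely showing it is nonnegative. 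This buys you strict dominance, hence uniqueness of the argmax for $p>q$, and makes the dependence on $k$ transparent; the paper's induction is more elementary but only records a weak inequality as stated. You also make explicit a step the paper leaves implicit: the reduction of the sink's ML estimate to the single sufficient statistic $X(\tau)$, where $\tau$ is the last successful reception time, justified by the Markov property of the source and the independence of the channel erasures from the source trajectory (the feedback channel affects only the transmitter's knowledge, not the sink's likelihood). Both arguments are sound; yours is tighter and slightly more informative.
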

\begin{proof}
See appendix \ref{ap-pro1}.
\end{proof}
At each slot $t$, the transmitter decides whether to stay idle or to send a new status update based on the following available information: 1) the current source state $X(t)$, 2) the battery's energy level $b(t)$, 3) the previous source state $X(t-1)$, and 4) the ACK reception indicator at the previous slot ${f(t-1)}$.
\subsection{Problem Formulation}\label{sysmB}
{Our main goal is to minimize the average estimation error, i.e., the discrepancy between the source state $X(t)$ and its estimated value at the sink $\hat{X}(t)$, by optimizing transmission decisions subject to an energy limitation imposed by the energy harvesting circuit.} %This is achieved by finding the transmission decision at each slot. \red{MC: I don't see what info previous sentence brings}%At each slot, the transmission decision is based on the source state, the battery's energy level, and the estimation of the source state at the sink.

Let $d: \big(X,\hat X\big ) \rightarrow \mathbb{R}^+$ denote a (generic) bounded distortion function, i.e., $|d(.)|<\infty$, meant to quantify the estimation error. 
%In general, the distortion function can be expressed as a matrix of size $N\times N$ with elements ${g_{i,j} = g(X=i,\hat X=j)}$ as follows
%     \begin{align}
%         %\begin{array}{ll}
%              g(.) \overset{\Delta}{=} \begin{bmatrix}
%                     g_{1,1} &\dots &g_{1,N}\\
%                     \vdots &\ddots &\vdots \\
%                     g_{N,1} &\dots & g_{N,N}
%                 \end{bmatrix}.
%         %\end{array}
%     \end{align}
{For instance, the distortion function can be defined as the error indicator, i.e., ${d\big(X,\hat X\big )=\mathbbm{1}_{\{X\neq X\}} }$\footnote{$\mathds{1}_{\{.\}}$ is an indicator function that equals $1$ when the condition(s) in $\{.\}$ is satisfied.}, the absolute value of the difference between $X(t)$ and $\hat{X}(t)$, i.e., ${d\big(X,\hat X\big )= | X-\hat X |}$, or mean squared error, i.e., ${d\big(X,\hat X\big )= | X-\hat X |^2}$.} %, i.e., ${d\big(X(t),\hat X(t)\big )= 1}$ if ${X(t)\neq\hat X(t)}$ and ${d\big(X(t),\hat X(t)\big )= 0}$ if ${X(t)=\hat X(t)}$, (the matrix entries derived as $g_{i,j}= \mathbbm{1}_{\{i\neq j\}}$), which penalizes equally any information mismatch between the source and the sink.
%
%We aim to minimize the time average expected value of the distortion function $d\big(X(t),\hat X(t)\big )$ subject to an energy causality constraint. \red{MC: I don't see what info previous sentence brings} 
{Then, the considered problem can be formulated as the following stochastic control problem:}
\begin{subequations}\label{pmain}
\begin{alignat}{2}
\displaystyle\amin_{} \quad & \underset{T\rightarrow \infty}{\limisup} ~~~ \frac{1}{T} \sum_{t=1}^{T}\mathbb{E} \{ d\big(X(t),\hat X(t)\big )\}\\
\mbox{subject to} \quad & b(t)-a(t)\geq 0\label{p1:1}\\
&a(t)\in \{0,1\}\label{p1:2},
\end{alignat}
\end{subequations}
with variables $a(t)$ for all $t\in\{1,2, \dots\}$. Inequality \eqref{p1:1} represents the energy limitation constraint, where the transmitter can send a status update if the battery level is larger than zero. The expression \eqref{p1:2} indicates the binary nature of the decision variable. In problem \eqref{pmain}, $\mathbb{E}\{\cdot\}$ is the expectation with respect to the randomness of the system (i.e., randomness in the source dynamic, the energy arrival, and the forward and feedback communication channels) and the decision variable $a(t)$.

\section{Transmission Scheduling Policy}\label{policy}
In this section, we propose transmission scheduling policies to solve problem \eqref{pmain}. Note that, due to the imperfect feedback channel the source state estimate $\hat X(t)$ is only partially observable at the transmitter side. Thus, we model problem \eqref{pmain} as a partially observable Markov decision process (POMDP) which is further transformed into a belief-MDP problem. Then, we propose two scheduling policies to solve the belief-MDP problem.

\subsection{POMDP Formulation}\label{pomdpsec}
A POMDP consists of five elements: 1) states, 2) actions, 3) observations, 4) cost,  and 5) state transition probabilities, which are defined next.\\
\textbf{States}: Let ${s(t) = \big(X(t), b(t), \hat X(t),f(t-1),X(t-1)\big)\in \mathcal{S}}$ denote the state of the system at slot $t$, where $\mathcal{S}$ is the state space. Note that, due to the imperfect feedback channel, the estimated value of the source state at the sink $\hat X(t)$ is only partially observable at the transmitter side.%, and we define a belief-state to represent the state of the sink. Let $\rho(t)=[\rho_1(t),\dots, \rho_i(t),\dots,\rho_N(t)]$ denote the belief-state at slot $t$, where ${\rho_i(t) = \mathrm{Pr}(\hat X(t)=i)}$.
\\
\textbf{Actions}: Let $a(t)\in\mathcal{A}_{s(t)}$ denote the action at slot $t$, where $\mathcal{A}_s$ denotes the space of the feasible actions at state $s$, defined as $\mathcal{A}_s= \big\{a\in\{0,1\} \mid b-a\geq 0\big\}$.\\
\textbf{Observations}: Let ${o(t)= \big(X(t),b(t),f(t-1), X(t-1)\big)}$ denote the observation at slot $t$, i.e., the observable part of the state. In general, the observation function  \cite[Chapter 7]{sigaud2013markov} is defined as the probability distribution function of the current observation given the current state and previous action, i.e., ${\mathrm{Pr}(o(t)\mid s(t), a(t-1))}$. Since the observation is always part of the state, the observation function can be expressed as
\begin{align}
    %\begin{array}{ll}
         &\hspace{-2.5mm}\mathrm{Pr}\big(o(t)| s(t), a(t-1)\big) = \mathbbm{1}_{\{o(t)=\left(X(t), b(t),f(t-1),X(t-1)\right)\}},
    %\end{array}
\end{align}
\textbf{Cost}: Let ${C\big(s(t)\big) = d\big(X(t),\hat X(t)\big)}$ denote the cost at slot $t$, where $d(\cdot)$ is the distortion function.\\
\textbf{State transition probabilities}: Let $\mathcal{P}(s'\mid s,a)$ denote the state transition probability defined as the probability of moving from the current state ${s}$ to the next state ${s'}$ under action ${a}$. For the sake of brevity, hereinafter, we denote $\bar \mu = 1-\mu$, $\bar p_f = 1-p_f$, $\bar p_s = 1-p_s$, and ${\tilde{b} = \min(b+1,B)}$. We further use $X_{-1}$ to denote the source state at the previous slot, $f_{-1}$ to denote the indicator of reception of ACK at the previous slot, ${s=( X, b,\hat{X},f_{-1},X_{-1})}$, and ${s'=( X', b',\hat{X}',f_{-1}',X_{-1}')}$. With these notations, the state transition probabilities can be expressed as
\begin{align}\label{prtmn}
    &\mathcal{P}(s'\mid s,a)=
    \begin{cases}
        p\bar \mu&a=0, s'=(X, b, \hat{X},0,X)\\
        p\mu&a=0, s'=(X, \tilde{b}, \hat{X},0,X)\\
        q\bar \mu&a=0, s'=(X', b, \hat{X}, 0,X)\\
        q\mu&a=0, s'=(X', \tilde{b}, \hat{X}, 0,X)\\
        p\bar \mu p_sp_f&a=1, s'=(X, b-1, X, 1,X)\\
        p \mu p_sp_f&a=1, s'=( X, b, X, 1,X)\\
        q\bar \mu p_sp_f&a=1, s'=(X', b-1, X, 1,X)\\
        q \mu p_sp_f&a=1, s'=(X', b, X, 1,X)\\
        p\bar \mu p_s\bar p_f&a=1, s'=(X, b-1, X, 0,X)\\
        p \mu p_s\bar p_f&a=1, s'=(X, b, X, 0,X)\\
        q\bar \mu p_s\bar p_f&a=1, s'=(X', b-1, X, 0,X)\\
        q \mu p_s\bar p_f&a=1, s'=(X', b, X, 0,X)\\
        p\bar \mu \bar p_s&a=1, s'=( X, b-1, \hat{X}, 0,X)\\
        p \mu \bar p_s&a=1, s'=(X, b, \hat{X}, 0,X)\\
        q\bar \mu \bar p_s&a=1, s'=(X', b-1, \hat{X}, 0,X)\\
        q \mu \bar p_s&a=1, s'=(X', b, \hat{X}, 0,X)\\
        0&\text{otherwise},
    \end{cases}
\end{align}
where $X'\in\{1,\dots,N\}$ and $ X'\neq X$.
\subsection{Belief-MDP Problem} \label{seq:Belief-MDP}
%To achieve the optimal decision making for the POMDP defined in Section \ref{pomdpsec}, we utilize the notion of \textit{belief} \cite[Chapter 7]{sigaud2013markov} and cast problem \eqref{pmain} as a belief-MDP problem. 

%Because of the imperfect feedback channel, the estimated value of the source state at the sink $\hat X(t)$ is only partially observable at the transmitter side. Consequently, only the observable part of the state $o(t)$ is accessible for decision-making, which is incomplete and poses challenges for (optimal) transmission scheduling. To address the information insufficiency, we define state-like quantities referred to as belief-state. The belief-state will retain the Markov property and capture all information necessary for optimal transmission scheduling. The belief-state consists of a belief about the partially observable component $\hat X(t)$ of the state and the observable component $o(t)$.

%\red{MC: I removed some strong claims of optimality and reshuffled a bit the above paragraphs; I suggest the next one instead:
Due to the imperfect feedback channel, the estimated value of the source state at the sink $\hat X(t)$ is only partially observable at the transmitter side. Consequently, only the observable part of the state $o(t)$ is accessible for decision-making, {which is incomplete state information} and poses challenges for (optimal) transmission scheduling. To address the information insufficiency, we utilize the notion of \textit{belief} \cite[Chapter 7]{sigaud2013markov} and cast the POMDP defined in Section \ref{pomdpsec} as a belief-MDP. To this end, we define state-like quantities, referred to as belief-states, that retain the Markov property and capture all information necessary for optimal transmission scheduling. The belief-state consists of the observable component $o(t)$ of the state and a belief about the partially observable component $\hat X(t)$.
%}

Let $I^c(t)$ denote the complete information state at slot $t$, which consists of: 1) initial probability distribution over the states, 2) the history of past and current observations, i.e., ${\{o(1),\dots,o(t)\}}$, and 3) the history of past actions, i.e., ${\{a(1),\dots,a(t-1)\}}$. Let $\rho(t)$ denote the belief about $\hat X(t)$ at slot $t$. The belief is a $N$-dimensional vector ${\rho(t)= [\rho_1(t),\dots,\rho_N(t)]}$ that {the entries are the conditional probability of the possible values of $\hat X(t)$ given $I^c(t)$.}  Formally, the $i$th entry of the belief is given as ${\rho_i(t) = \mathrm{Pr}\big(\hat X(t)=i\mid I^c(t)\big)}$.

The belief at slot $t$, $\rho(t)$, is updated based on the previous belief $\rho(t-1)$, the current observation $o(t)$, and the previous action $a(t-1)$. %The following presents the observations and the belief evolution.
%Below, we derive the belief update for three possible cases according to the action taken and the observation. 
Below, we derive the belief update for three possible cases defined by the chosen action and the resulting observation.
\begin{itemize}
    \item \textbf{Case 1}: $a(t)=0$. The transmitter stays idle. %, i.e., $a(t)=0$. %In this case, the transmitter does not receive any feedback. 
    Since the sink does not change its estimate $\hat X(t)$ when it does not receive a status update, the belief remains unchanged, i.e.,
        \begin{align}\label{neutbe}
            %\begin{array}{ll}
                &\rho(t+1) = \rho(t).
            %\end{array}
        \end{align}
        \item \textbf{Case 2}: {$a(t)=1$ and $f(t)=1$. The transmitter sends a status update containing the information $X(t)$ and receives ACK. %, i.e., $a(t)=1$ and $f(t)=1$. 
        In this case, the transmitter infers the exact value of sink's estimate at slot $t+1$, i.e.,  $\hat X(t+1) = X(t)$. Thus, the entry of $\rho$ associated with $X(t)$ becomes one, and all others become zero. The belief update is given as}         
        \begin{equation}\label{case2}
        \rho(t+1)=e_{X(t)},
        \end{equation}
        % \begin{equation}\label{case2}
        % \rho_i(t+1) =
        % \begin{cases}
        %          1 &    i=X(t)\\
        %          0 &    i\neq X(t),
        % \end{cases}
        % \end{equation}
        %which can be written compactly as $\rho(t+1)=e_{X(t)}$ 
        where $e_n$ is the $n$th column of identity matrix $I_N$.
        %standard basis vector in $\mathbb{R}^N$ (or $n$th column of identity matrix $I_N$). %(I don't know if this can also replace the reset state below but it can be certainly be used (16))}
        {We refer to this belief as \textit{reset belief} to $X(t)$.}   
    \item \textbf{Case 3}: $a(t)=1$ and $f(t)=0$. The transmitter sends the status update $X(t)$ but does not receive ACK. This can be caused by one of the following two events: 1) event $E_1$ where the status update is received at the sink but ACK is lost over the feedback channel, and 2) event $E_2$ where the sink does not receive the status update due to an error over the forward channel. Under $E_1$, sink's estimate at slot $t+1$ become $\hat X(t+1) = X(t)$; under $E_2$, sink's estimate remains unchanged, i.e., $\hat X(t+1) = \hat X(t)$. Thus, the belief update can be written as
    %Since events $E_1$ and $E_2$ are mutually exclusive and collectively exhaustive, the belief update can be written as
   \begin{equation}\label{case3apx}
  \rho(t+1) = \mathrm{Pr}(E_1\mid \overline{\text{ACK}}) e_{X(t)} +  \mathrm{Pr}(E_2\mid \overline{\text{ACK}}) \rho(t) 
  \end{equation}
  where we used $\overline{\text{ACK}}$ to denote the lack of ACK event and the fact that given $\overline{\text{ACK}}$, events $E_1$ and $E_2$ are mutually exclusive and collectively exhaustive. 
  % \red{I believe the expression above is a bit easier to understand than the one below, but we can discuss abut it } {\blue{ I agree that (9) is better than (10), I  was concerned about (13) which is derived by substituting. Now I changed the 'written' before (13) to 'rewritten' and we can remove (10)}} 
  %  \begin{align}\label{case3apxold}
  %  & \rho_i(t+1) \\
  %  & = 
  %  \begin{cases}
  %       \mathrm{Pr}(E_1\mid \overline{\text{ACK}} ) + \mathrm{Pr}(E_2\mid \overline{\text{ACK}})\rho_i(t) &    i=X(t)\\
  %       \mathrm{Pr}(E_2\mid \overline{\text{ACK}})\rho_i(t) &   i\neq X(t), 
  %           \end{cases} \nonumber
  %   \end{align}

We show in Appendix~\ref{case3p1} that the conditional probabilities in~(\ref{case3apx}) are given by
\begin{align}\label{case3e1}
\mathrm{Pr}(E_1\mid\overline{\text{ACK}}) & = \frac{p_s(1-p_f)}{1-p_sp_f}, \\
\label{case3e2} 
\mathrm{Pr}(E_2\mid\overline{\text{ACK}}) & = \frac{1-p_s}{1-p_sp_f}.
\end{align}

By substituting \eqref{case3e1} and \eqref{case3e2} into \eqref{case3apx}, the belief update for Case 3 can be expressed as
\begin{equation}\label{case3apxfinal}
  \rho(t+1) = \frac{p_s(1-p_f)}{1-p_sp_f} e_{X(t)} +  \frac{1-p_s}{1-p_sp_f} \rho(t). 
\end{equation}
\end{itemize}
The belief-MDP is defined by the following elements:\\
\textbf{Belief-states}: Let $l(t) \!=\! \big(X(t),b(t),\rho(t),f(t\!-\!1),X(t\!-\!1)\big)\!\in \! \mathcal{L}$ denote the belief-state at slot $t$, where $\mathcal{L}$ is the belief-state space.\\
\textbf{Actions}: Let $a(t)\in\mathcal{A}_l$ denote the action at slot $t$, where $\mathcal{A}_l$ is the space of the feasible actions at belief-state $l$.
% , defined as
% \begin{align}
%     %\begin{array}{ll}
%         &\mathcal{A}_l= \big\{a\in\{0,1\} \mid b-a\geq 0\big\}.
%     %\end{array}
% \end{align}
Note that since the action space depends on the observable part of the belief-state $o(t)$, the action space in the POMDP and the belief-MDP are the same.\\
\textbf{Cost}: Let $C\big(l(t)\big)$ denote the cost at slot $t$ defined as the expected value of the distortion function with respect to the belief, given as
\begin{align}\label{bmcost}
    %\begin{array}{ll}
         &  C\big(l(t)\big) = \sum_{i=1}^{N}\rho_i(t)d\big(X(t),i\big).
    %\end{array}
\end{align}
%Note that the cost function at slot $t$ is independent of the action at the \red{current?} slot. \red{MC: but it depends on the previous actions, isn't it? As it is now, this sentence brings just more confusion...} \blue{I agree that it brings more confusion, and I think it is better to be removed. Actually, it is the cost for a given belief-state.}\\
\textbf{Belief-state transition probabilities}: Let ${\mathcal{P}(l'\mid l,a)}$ denote the state transition probability defined as the probability of moving from the current belief-sate ${l = \big(X, b, \rho,f_{-1},X_{-1}\big)}$ to the next belief-state ${l' = \big(X', b', \rho',f_{-1}',X_{-1}'\big)}$ under action $a$. For the sake of brevity, we use $\tilde\rho$ to denote the value of the next belief for the case $a(t)=1$ and $f(t)=0$, obtained from \eqref{case3apxfinal}. %denote the next belief $\rho'$ by $\tilde\rho$ when it is derived by equation \eqref{case3apxfinal}. 
The belief-state transition probabilities can be expressed as
\begin{align}\label{prtm}
    &\mathcal{P}(l'\mid l,a)=\notag\\&\hspace{-1mm}
    \begin{cases}
        p\bar \mu&a=0, l'=(X, b, \rho, 0,X)\\ 
        p \mu&a=0, l'=(X, \tilde{b}, \rho, 0,X)\\
        q\bar \mu&a=0, l'=(X', b, \rho, 0,X)\\  % X'\neq X
        q \mu&a=0, l'=(X', \tilde{b}, \rho, 0,X)\\
        p\bar \mu p_sp_f&a=1, l'=(X, b-1, e_X, 1,X)\\ 
        p\mu p_sp_f&a=1, l'=(X, b, e_X, 1,X)\\
        q\bar \mu p_sp_f&a=1, l'=(X', b-1, e_X, 1,X)\\ 
        q\mu p_sp_f&a=1, l'=(X', b, e_X, 1,X)\\
        p\bar \mu (1-p_sp_f)&a=1, l'=(X, b-1, \tilde\rho, 0,X)\\ 
        p\mu (1-p_sp_f)&a=1, l'=(X, b, \tilde\rho, 0,X)\\
        q\bar \mu (1-p_sp_f)&a=1, l'=(X', b-1, \tilde\rho, 0,X)\\ 
        q\mu (1\!-\!p_sp_f)&a=1, l'=(X', b, \tilde\rho, 0,X)\\
        0&\text{otherwise},
    \end{cases}
\end{align}
where $X'\in\{1,\dots,N\}$ and $ X'\neq X$. 
Let $\pi$ represent a policy that determines the action to be taken at each belief-state. {A randomized policy is determined by a distribution ${\pi(a\mid l): {\mathcal{L}\times \mathcal{A}_l, } \rightarrow [0,1]}$}. A special case is a deterministic (stationary) policy that associates probability one to a single action at any given belief-state. To simplify notation, we use $\pi(l)$ to denote the action taken in belief-state $l$ under a deterministic policy $\pi$.

Given an initial belief-state $l(0)$, the belief-MDP problem is given as
\begin{align}\label{pomdpp}
        \amin_{\pi\in \Pi} \quad  \underset{T\rightarrow \infty}{\limisup} \frac{1}{T} \sum_{t=1}^{T}\mathbb{E} \{ C\big(l(t)\big)\mid l(0)\},
\end{align}
where $\Pi$ is the set of all feasible policies. We denote by $C^*\big(l(0)\big)$ the optimal value of problem \eqref{pomdpp} and 
by $\pi^*\big(l(0)\big)$ an optimal policy.
%The optimal value of problem \eqref{pomdpp} is denoted by $C^*\big(l(0)\big)$, and an optimal policy is denoted by $\pi^*\big(l(0)\big)$. 

Note that, for a general MDP, the optimal value $C^*$ and the optimal policies $\pi^*$ may depend on the initial state $l(0)$. %However, we will show that for the belief-MDP \eqref{pomdpp}, $C^*$ and $\pi^*$  are independent of the initial state. 
However, Theorem~\ref{theo1} below shows that the belief-MDP~\eqref{pomdpp} is communicating \cite[Chapter 8.3.1]{puterman1994} and, therefore,  its optimal value $C^*$ is independent of the initial state. 

\begin{theorem}\label{theo1}
    The belief-MDP %~\eqref{pomdpp} %defined in Section~\ref{seq:Belief-MDP} 
    defined above is communicating, i.e., for any two states $l$ and $l'$ in $\mathcal L$, $l'$ is accessible from $l$.
\end{theorem}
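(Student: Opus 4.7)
The plan is to show that, given any two belief-states $l, l' \in \mathcal{L}$, there is a finite sequence of actions and random outcomes taking the system from $l$ to $l'$ with strictly positive probability. The key ingredient is that every elementary probability in the model, namely $p$, $q$, $\mu$, $1-\mu$, $p_s$, $1-p_s$, $p_f$, and $1-p_f$, is strictly positive, so any prescribed finite trajectory of actions, source transitions, energy arrivals, and channel outcomes carries positive joint probability. The argument proceeds in two stages, first steering the system from $l$ to a convenient canonical state, and then reproducing from that canonical state the specific trajectory that generates $l'$.

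In the first stage, I would drive the system from $l$ to a ``reset'' belief-state of the form $(X_0, b_0, e_{X_0}, 1, X_0)$ for a freely chosen source label $X_0$ and battery level $b_0$. During idle slots, the belief $\rho$ is preserved by \eqref{neutbe}; meanwhile the battery can be recharged by repeated energy arrivals (each of probability $\mu>0$), and the source can be steered to any desired value through its ergodic Markov chain because both self-transitions ($p>0$) and cross-transitions ($q>0$) occur with positive probability. Once $X(t)=X_0$ and the battery is at the required level, I set $a=1$ and condition on both forward delivery and ACK reception (joint probability $p_s p_f>0$); by \eqref{case2} the belief collapses to $e_{X_0}$, placing the system at the desired reset state. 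The battery level $b_0$ can be fine-tuned by choosing the number of idle slots preceding the transmission.

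In the second stage, I would replay from this reset state the very action/outcome sequence that generates $l'$. Because $\rho'$ belongs to $\mathcal{L}$, it arises from some finite history obtained by applying reset updates \eqref{case2} and failed-ACK updates \eqref{case3apxfinal} to an earlier reset belief; by selecting $X_0$ to coincide with the source label at the start of that history, and then replaying the same sequence of actions while conditioning on the corresponding source transitions, energy arrivals, and channel outcomes (each of positive probability), the system lands exactly at $l'$. The delicate point, which I expect to be the main obstacle, is matching the precise belief vector $\rho'$: since the belief updates are deterministic given actions and observations, one cannot reach $\rho'$ via any trajectory but must execute a history that genuinely produces it. The role of Stage 1 is precisely to provide the clean reset starting point from which the required generating history of $l'$ can be reproduced step by step, so that the composite path from $l$ to $l'$ has strictly positive probability and communication is established.
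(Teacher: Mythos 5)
Your proposal is correct and follows essentially the same route as the paper's proof: both first reach a reset belief via a transmission that succeeds on the forward and feedback channels, and then replay, with positive probability, the exact source/energy/channel history that generates $l'$ (the paper frames this via a randomized transmit-with-probability-$0.5$ policy and splits explicitly into the cases $b'\ge b$ and $b'<b$, with sub-cases for $f_{-1}'$ and whether $\rho'$ is a reset belief). The only small imprecision is your remark that the battery level can be tuned purely by choosing the number of idle slots---idle slots can only raise or preserve the level, and lowering it requires extra transmissions as in the paper's $b'<b$ case---but this is easily repaired and does not affect the validity of the argument.
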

\begin{proof}
We show that, starting from any initial state $l \in \mathcal{L}$, a simple randomized policy which sends updates with probability $0.5$ whenever there is energy in the battery, reaches any other state $l'\in \mathcal{L}$ with a positive probability. The details are presented in Appendix \ref{ap_p_th}.
\end{proof}

\begin{corollary}
The optimal value of problem~\eqref{pomdpp}  is the same for all initial states, i.e., $C^*(l) = C^*$ for all $l \in \mathcal{L}$ .
\end{corollary}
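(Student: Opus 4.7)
The plan is to leverage Theorem~\ref{theo1} (the belief-MDP is communicating) together with the boundedness of the one-stage cost to conclude that the long-run average cost is independent of the initial belief-state. The underlying idea is standard in average-cost MDP theory (see, e.g., \cite{puterman1994}, Chapter 8.3): in a communicating MDP with bounded costs, a finite transient phase used to ``transport'' the chain from one state to another has no effect on the Ces\`aro limit defining the long-run average cost. Hence any two initial states can be coupled to the same asymptotic cost trajectory.

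Concretely, I would argue as follows. Fix two arbitrary belief-states $l, l' \in \mathcal{L}$ and let $\pi^\star$ be a policy attaining (or, more generally, approaching) the optimal value $C^\ast(l)$ from initial state $l$. By Theorem~\ref{theo1}, there exists a (possibly randomized) policy $\pi_{l' \to l}$ under which, starting from $l'$, the chain reaches $l$ in finite expected time $\mathbb{E}[\tau] < \infty$. Construct the composite policy $\tilde \pi$ that starts at $l'$, follows $\pi_{l' \to l}$ until the first hitting time of $l$, and then applies $\pi^\star$ thereafter. Because the per-slot cost $C(l(t)) = \sum_{i=1}^{N} \rho_i(t)\, d(X(t),i)$ is uniformly bounded by $M := \sup_{X,\hat X} d(X,\hat X) < \infty$ (the distortion is bounded by assumption), the expected cost accumulated during the transient phase is at most $M\,\mathbb{E}[\tau] < \infty$, and therefore its contribution to the Ces\`aro average vanishes as $T \to \infty$. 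This yields
\begin{align*}
C^\ast(l') \;\le\; \limsup_{T \to \infty} \frac{1}{T} \sum_{t=1}^{T} \mathbb{E}\{C(l(t)) \mid l(0)=l',\, \tilde\pi\} \;=\; C^\ast(l).
\end{align*}
Swapping the roles of $l$ and $l'$ gives the reverse inequality, and equality follows.

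The main obstacle is making the ``transient phase is negligible'' step rigorous: one has to bound not only the hitting time $\tau$ but also the cost paid before hitting $l$. This is precisely where the boundedness of $d(\cdot)$ (which gives $C(\cdot) \le M$) and the \emph{finite expected} hitting time (not merely finite a.s., which Theorem~\ref{theo1} readily provides since accessibility is via a policy with positive per-step probability of decreasing a distance function on the finite observable part of $\mathcal{L}$) are both needed. Once these two ingredients are combined, the conclusion $C^\ast(l) = C^\ast$ for all $l \in \mathcal{L}$ is immediate, and the corollary follows.
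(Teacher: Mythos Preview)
Your argument is correct and is essentially the standard coupling/interchange argument that underlies the result the paper invokes. The paper's own proof is a single-line citation of \cite[Proposition~4.2.3]{bertks}, which asserts (for communicating MDPs) that the optimal average cost does not depend on the initial state; you have instead unpacked the mechanism behind that proposition: use the communicating property to transport the chain from $l'$ to $l$ in finite expected time, bound the transient cost by $M\,\mathbb{E}[\tau]$, and observe that this contribution vanishes in the Ces\`aro limit. The gain of your route is self-containment and transparency---one sees exactly where boundedness of $d(\cdot)$ and finiteness of $\mathbb{E}[\tau]$ enter---whereas the paper's route buys brevity. One point worth tightening: your parenthetical justification of $\mathbb{E}[\tau]<\infty$ (``positive per-step probability of decreasing a distance function on the finite observable part of $\mathcal{L}$'') is loose, because the belief component $\rho$ of the target state is not part of that finite observable tuple; a cleaner argument is that the randomized policy exhibited in the proof of Theorem~\ref{theo1} reaches any target belief by first resetting (receiving an ACK) and then executing a \emph{fixed finite-length} event sequence of strictly positive probability, so the number of independent attempts until success is geometric and $\mathbb{E}[\tau]<\infty$ follows.
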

\begin{proof}
Follows directly from \cite[Proposition 4.2.3]{bertks}.
\end{proof}

\begin{corollary}[Bellman's Equation]
If there is a scalar $\bar h$ and a set of values $\{h(l)\}_{l \in \mathcal{L}}$ that satisfy %the following  Bellman’s equation
 \begin{align}\label{eq:Bellman}
 &\hspace{-2.1mm}\bar h \!+\! h(l) \!= \!\min_{a\in \mathcal{A}_l}\left[C(l) + \sum_{l'\in \mathcal{L}}\mathcal{P}(l' \mid l,a)h(l')  \right]\!,~ \forall l\in \mathcal{L},
 \end{align}
then $\bar h$ is the optimal value, i.e., $\bar h = C^*$; furthermore,  the actions $a_l^*$ that attain the minimum in \eqref{eq:Bellman} for each $l$, provide an optimal deterministic policy $\pi^*$. 
\end{corollary}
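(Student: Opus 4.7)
The plan is to apply the standard verification argument for Bellman's average-cost equation in a communicating MDP: the equation yields a lower bound on the long-run average cost achievable by any admissible policy, and the policy attaining the minimum inside the equation achieves that bound with equality, so the two bounds coincide.

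First, I would fix an arbitrary admissible policy $\pi \in \Pi$ and an arbitrary initial belief-state $l(0)$. For any realized trajectory $\{l(t)\}$ generated by $\pi$, applying~\eqref{eq:Bellman} at $l=l(t)$ and upper-bounding the minimum by the value attained at the feasible action $\pi(l(t))\in \mathcal{A}_{l(t)}$ gives
\begin{align}
\bar h + h(l(t)) \;\le\; C(l(t)) + \sum_{l' \in \mathcal{L}} \mathcal{P}\bigl(l'\mid l(t),\pi(l(t))\bigr)\, h(l').
\end{align}
Taking total expectation with respect to the system randomness and summing from $t=1$ to $T$, the right-hand side telescopes in $h$, leaving only the boundary contribution $\bigl(\mathbb{E}\{h(l(T+1))\}-\mathbb{E}\{h(l(1))\}\bigr)/T$. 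Dividing by $T$ and letting $T\to\infty$, this boundary term vanishes as long as $\{h(l)\}_{l\in\mathcal{L}}$ is uniformly bounded, so I obtain
\begin{align}
\bar h \;\le\; \underset{T\to\infty}{\limisup}\; \frac{1}{T}\sum_{t=1}^{T}\mathbb{E}\{C(l(t))\mid l(0)\}.
\end{align}
Since $\pi$ and $l(0)$ were arbitrary, this yields $\bar h \le C^\ast$.

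Next, I would run the same chain of inequalities under the deterministic policy $\pi^\ast$ that selects, in every belief-state $l$, a minimizer $a_l^\ast$ of~\eqref{eq:Bellman}. By construction each inequality is tight, so the long-run average cost incurred by $\pi^\ast$ starting from any $l(0)$ equals $\bar h$. This establishes both the reverse inequality $\bar h \ge C^\ast$ and the optimality of $\pi^\ast$, which together give $\bar h = C^\ast$.

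The main obstacle I anticipate is purely technical: justifying the uniform boundedness of $h(\cdot)$ over the uncountable belief-state space $\mathcal{L}$, which is what allows discarding the boundary term in the $T\to\infty$ limit. Because the per-stage cost $C(l) = \sum_i \rho_i\, d(X,i)$ inherits the uniform bound on the distortion function and the MDP has already been shown to be communicating in Theorem~\ref{theo1}, this boundedness is a standard consequence of average-cost dynamic programming; it suffices to invoke \cite[Prop.~4.2.1 and Prop.~4.2.5]{bertks} or \cite[Thm.~8.4.3]{puterman1994} to close the argument without further work.
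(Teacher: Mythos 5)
Your argument is correct, but it is worth noting that the paper does not prove this corollary at all: its ``proof'' is a one-line citation to \cite[Proposition 4.2.3]{bertks}, whereas you have written out the standard verification argument that underlies that proposition (lower-bound the minimum by the value at the action actually taken, telescope the relative value function, divide by $T$, and observe that the greedy policy makes every inequality tight). Two points deserve care if your derivation is to stand on its own. First, $\Pi$ contains randomized and history-dependent policies, so you cannot plug in a single action $\pi(l(t))$; instead you should use the fact that the minimum over $a\in\mathcal{A}_{l(t)}$ lower-bounds the conditional expectation of $C(l(t))+\sum_{l'}\mathcal{P}(l'\mid l(t),a)h(l')$ under any action distribution, which gives $\bar h + h(l(t)) \le \mathbb{E}\{C(l(t)) + h(l(t+1))\mid \mathcal{F}_t\}$ and the rest of the telescoping goes through unchanged. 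Second, the boundedness of $\{h(l)\}_{l\in\mathcal{L}}$ is genuinely needed to kill the boundary term, and it is \emph{not} part of the corollary's hypothesis; since $\mathcal{L}$ is uncountable before the truncation of Section~\ref{trpomdp}, this is not automatic and cannot simply be inherited from finite-state results in \cite{bertks} or \cite{puterman1994} --- either add boundedness of $h$ as an explicit hypothesis (which is how such verification theorems are usually stated on general state spaces, and is harmless since the paper only ever applies the result on the truncated finite space) or restrict the claim to the finite truncated belief-MDP. With that caveat made explicit, your proof is a complete and self-contained replacement for the citation, and it has the advantage of exposing exactly which regularity condition the citation is silently assuming.
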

\begin{proof}
Follows directly from \cite[Proposition 4.2.3]{bertks}.
\end{proof}

%\red{MC: It is interesting to point out
Note that Bellman's equation~\eqref{eq:Bellman} represents a system of $|\mathcal{L}|$ (nonlinear) equations with $|\mathcal{L}|+1$ unknowns. Thus, its solution is not unique, i.e., if  $\{h(l)\}_{l \in \mathcal{L}}$ satisfy \eqref{eq:Bellman} then so does $\{h(l) + k \}_{l \in \mathcal{L}}$ for any $k$. This suggests that an ordinary value iteration (commonly used for solving discounted MDPs), which updates \emph{all} values $\{h(l)\}_{l \in \mathcal{L}}$ at every step, will likely fail to converge here. Thus, in the next section, we propose a relative value iteration method which selects an arbitrary reference state $l_{\text{ref}} \in \mathcal{L}$, sets its corresponding entry in $h(l)$ to zero, i.e.,   $h(l_{\text{ref}}) = 0$, and each step updates all others entries  $\{h(l)\}_{l \in \mathcal{L}, l \neq l_{\text{ref}}}$. %of $h(l)$ for all $l \in \mathcal{L}, l \neq l_{\text{ref}}$.}   

\subsection{POMDP-based Solution to Belief-MDP Problem}\label{trpomdp}
%We propose a relative value iteration algorithm (RVIA)~\cite[Section 4.3]{bertks} to solve Bellman's equation~\eqref{eq:Bellman}. 
The relative value iteration algorithm (RVIA)~\cite[Section~4.3]{bertks} turns  Bellman's equation~\eqref{eq:Bellman} into an iterative procedure which selects an arbitrary reference state  $l_{\text{ref}}\in\mathcal{L}$, kept fixed throughout the iterations, and each step $i$ updates the estimates $\{h^i(l)\}_{l \in \mathcal{L}, l \neq l_{\text{ref}}}$ of the relative value function, until convergence. Specifically, we initialize $h^0(l) = 0$ for all $l \in \mathcal{L}$ and for $i=1,2,\ldots$, we update
%The RVIA is an iterative algorithm in which the relative value function is updated at each iteration $i\in\{0,1,\dots\}$.
% Let $Q^i(l)$ denote the value function \red{MC: I doubt that ``value function'' makes sense in this context. Whose values are these? What do they mean? Have you found them somewhere where you gave the name based on the similarity with the discounted case?} given as
 \begin{align} \label{RVIA_V_update}
V^{i}(l) & = \min_{a\in \mathcal{A}_l}\left[C(l) +\sum_{l'\in \mathcal{L}}\mathcal{P}(l' \mid l,a)h^{i-1}(l')\right], 
\\ \label{RVIA_h_update}
h^{i}(l) & = V^{i}(l) - V^{i}(l_{\text{ref}}),
\end{align}
until $\displaystyle \max_{l\in\mathcal{L}}|h^{i}(l)-h^{i-1}(l)|\leq\epsilon$, where $\epsilon$ is small constant used for the stopping criterion.  
% The relative value function is updated at each iteration as follows:
% \begin{align*}
%           \pi^{*}(l)=\argmin_{a\in \mathcal{A}_l}\left[C(l) +\sum_{l'\in \mathcal{L}}\mathrm{Pr}(l' \mid l,a)h^{i-1}(l')\right]\\
%           h^i = \left[C(l) +\sum_{l'\in \mathcal{L}}\mathrm{Pr}(l' \mid l,\pi^*(l))h^{i-1}(l')\right]-h^{i-1}(l^{\mathrm{ref}}),
%  \end{align*}
%  \red{MC: I doubt that above equation is correct ... }
%where $l^{\mathrm{ref}}\in\mathcal{L}$ is an arbitrarily chosen reference belief-state which remains unchanged throughout the iterations. 
Once the iterative process of RVIA converges, i.e., $\displaystyle h(l) = \lim_{i\rightarrow\infty} h^i(l)$, the algorithm provides an optimal deterministic policy $\pi^* (l) = a^*_l$, where $\{a^*_l\}_{l\in \cal{L}}$ is the set of actions that attain the minimum in~(\ref{RVIA_V_update}).  Furthermore, the optimal value of problem \eqref{pomdpp} is given by $\displaystyle C^{*} = \lim_{i\rightarrow\infty} V^i(l_{\mathrm{ref}})$.

To employ RVIA, the belief-state space (and the action space) must be finite. However, since the believe entry $\rho(t)$ is continuous, i.e., $\rho(t) \in \mathbb{R}$, the belief-state space $\cal{L}$ become infinite. Fortunately, the belief evolution follows a specific pattern, which we exploit to truncate the belief-state space $\cal{L}$ and subsequently to develop a practical approximate solution using RVIA. 

{To express a belief space compactly let $u_1$ and $u_2$ denote the probabilities in~\eqref{case3apx} as ${u_1=\mathrm{Pr}(E_1\mid \overline{\text{ACK}})}$ and ${u_2=\mathrm{Pr}(E_2\mid \overline{\text{ACK}})}$.} 
Note that, whenever the transmitter sends a status update and receives ACK, the belief is set to a specific value given by~(\ref{case2}) (i.e., $e_{X(t)}$).
Furthermore, the belief remains unchanged whenever the transmitter stays idle (see (\ref{neutbe})). Thus, from~\eqref{case3apx}, the space of all possible values that $\rho(t)$ can reach during $m$ consecutive transmission attempts without receiving ACK can be expressed as $\Omega_m = \cup_{i=1}^{m}\Phi(i)$,
where
\begin{align}\label{phimn}
   \Phi(i)=&\{u_2^i x_i+ u_2^{i-1}u_1 x_{i-1}+ \cdots + u_1 x_{0}\mid \notag\\ &x_0,\cdots,x_i \in \{e_1,\ldots,e_N \}\},
\end{align}
i.e., linear combinations between all possible sequences of ${m+1}$ standard basis vectors with the coefficients specified above.

%For the sake of brevity, we use $u_1$ and $u_2$ to denote the probabilities in~\eqref{case3apx} as ${u_1=\mathrm{Pr}(E_2\mid \overline{\text{ACK}})}$ and ${u_2=\mathrm{Pr}(E_1\mid \overline{\text{ACK}})}$. Based on \eqref{case3apx}, the space of all possible beliefs after $m$ consecutive transmission attempts without receiving ACK is $\cup_{i=1}^{m}\Phi(i)$, where $\Phi(i)$ is given as
% \blue{\begin{align}\label{phimn}
%    \Phi(i)=&\{u_1^i x_i+ u_1^{i-1}u_2 x_{i-1}+ \cdots + u_2 x_{0}\mid \notag\\ &x_0,\cdots,x_i \in \{e_1,\ldots,e_N \}\}.
% \end{align}}

%\blue{Given that $ u_1<1$, $u_1^m$ goes to zero as $m\rightarrow \infty$ and thus, for large values of $m$, the elements in space $\Phi(m+1)$ are roughly the same as the elements in space $\Phi(m)$. In addition, the probability of not receiving feedback after $m$ consecutive transmission attempts is $(1-p_sp_f)^m$, which becomes small for a large value of $m$. Therefore, we truncate the belief-state space by considering a sufficiently large value of $m$.}

%\blue{In practice, the system may transition to a belief-state out of the truncated belief-state space, i.e., the number of consecutive transmission attempts without receiving ACK exceeds $m$. To deal with this situation, we approximate the beliefs that are marginally out of the truncated belief-state space, i.e., $\Phi(m+1)$, with the closest beliefs in the truncated belief-state space.}

%\red{MC: I suggest replacing the above paragraphs by:}

To obtain a finite belief-state space $\cal L$, we restrict all the possible values of the belief $\rho(t)$ to those contained in ${\Omega}_m$.  This is reasonable because the probability of not receiving feedback during $m$ consecutive transmission attempts is $(1-p_sp_f)^m$, i.e., it decreases exponentially in $m$. Thus,  for a large enough $m$, the space of all possible values of $\rho(t)$ is well approximated by ${\Omega}_m$. In the unlikely (but still possible) event when the number of consecutive unsuccessful transmission attempts exceeds $m$, we replace the value of the belief $\rho(t)$ obtained from~\eqref{case3apx} by the closest value contained in set ${\Omega}_m$.
Closeness is measured using the Kullback–Leibler (KL) divergence \cite[Chapter 2]{cover1999elements}. We define the updated belief $\rho^*$ as
\begin{equation}
\rho^{*} \in \argmin_{\rho \in {\Omega}_m} ~~~\operatorname{KL}\left(\rho(t)|| \rho\right)    
\end{equation}
where the KL divergence between the beliefs is given by $\operatorname{KL}\left(\rho(t)|| \rho\right)  = \sum_{i=1}^N \rho_i(t)\log \frac{\rho_i(t)}{\rho_i}$.
% \begin{equation}
% \operatorname{KL}\left(\rho(t)|| \rho\right)  = \sum_{i=1}^N \rho_i(t)\log \frac{\rho_i(t)}{\rho_i}.
% \end{equation}
% \begin{equation}
% \rho^{*} \in \argmin_{\rho \in {\Omega}_m} ~~~\operatorname{KL}\left(\rho(t)|| \rho\right)    
% \end{equation}
% where we evaluate the closeness between the beliefs via the Kullback–Leibler (KL) divergence \cite[Chapter 2]{cover1999elements} %statistical distance\red{MC: NOT the right word, check what Cover used; footnote is too detailed}.\footnote{KL-divergence can measure the distance  between two probability distributions of a discrete random variable from each distribution point of view. Given $\rho'$ and $\rho$, the KL-divergence from point of view of $\rho'$ is given as ${KL}\big(\rho'\|\rho\big) = \sum_i \rho'_i\log \frac{\rho'_i}{\rho_i}$.} 
% \begin{equation}
% \operatorname{KL}\left(\rho(t)|| \rho\right)  = \sum_{i=1}^N \rho_i(t)\log \frac{\rho_i(t)}{\rho_i}.
% \end{equation}
This operation can be implemented efficiently via an offline generated look-up-table which provides the closest belief inside ${\Omega}_m$ for each element in $\Phi(m+1)$. %and \red{adjust accordingly the transition probabilities $\mathcal{P}(l' \mid l,a)$ for all $l \in \Phi(m)$ and $l' \in \operatorname{Proj}_{{\Omega}_m}\{\Phi(m+1)\}$. Details in ... refer to AlgXXX}

% \blue{Replacing the beliefs of the belief-states associated with $\Phi(m+1)$ with the closest beliefs inside $\Omega_m$, introduces new possible transitions between belief-states $l\in\mathcal{L}$. Thus, it is necessary to adjust the belief-state transition probabilities.}

{Replacing the beliefs of the belief-states associated with $\Phi(m+1)$ with the closest beliefs inside $\Omega_m$  
adds new (possible) transitions between belief-states in the truncated belief-state space. Thus, we shall update the belief-state transition probabilities, i.e., $\mathcal{P}(l'\mid l,a)$ given in \eqref{prtm}, with a new state transition probabilities denoted as  $\widetilde {\mathcal{P}}(l'\mid l,a)$ for all $l'$ and $l$ in the truncated belief-state space.
%A transition from the current belief-state $l$ to the next belief-state $l'$ could possibly be directly or from the belief-states outside of the truncated belief-state space whose projection is  $l'$. Thus,
A transition from $l\in\mathcal{L}$ to $l'\in\mathcal{L}$ could possibly be directly, or
indirectly by the projection, i.e., 
$l$ transits to
a belief-state outside of the truncated belief-state space whose projection is $l'$. 
Thus,
}
%Given that it might be multiple independent transitions to the next state $l'$ from the current state $l$ (directly from the state within the truncated belief-state space and those from the outside of the space projected by $l'$), we have
% \begin{align}\label{eq:uppr}
% \displaystyle
%     \widetilde{\Pr}\{l'\,|\,l,a\} = {\Pr}\{l'\,|\,l,a\} + \sum_{\underline l}{\Pr}\{\underline{l}\,|\,l,a\},
% \end{align}
% where the summation in \eqref{eq:uppr} is taken over all $\underline l$ that are outside of the truncated space and their projection is $l'$.
\begin{align}\label{eq:uppr}
\displaystyle
    \widetilde {\mathcal{P}}(l'\mid l,a) = \mathcal{P}(l'\mid l,a) + \sum_{\underline l\in\underline{\mathcal{L}}}{\Pr}\{\underline{l}\,|\,l,a\},
\end{align}
{where $\underline{\mathcal{L}}$ is a set of belief-states that are outside of the truncated belief-state space, and their projection is $l'$.}

{Using truncated belief space and the transition probabilities in \eqref{eq:uppr}, we implement RVIA to solve the belief-MDP problem. The details of the proposed POMDP-based transmission policy are presented in Algorithm~\ref{pomdpa}.
}
% We now have a finite belief-state space and its associated transition probabilities through which we can implement RVIA generally described by \eqref{RVIA_V_update} and \eqref{RVIA_h_update}. The details of the proposed POMDP-based transmission policy are presented in Algorithm \ref{pomdpa}.
% \blue{Using the updated belief-state transition probabilities and RVIA, we propose an POMDP-based transmission policy, which is summarized in Algorithm \ref{pomdpa}.}
% Using the truncated belief-state space $\mathcal{L}$, the space of belief-states which are marginally out of the truncated belief-state space, i.e., $\mathcal{L}^{\mathrm{out}}$, the approximation table $\Phi^{\mathrm{aprox}}$ (derived by Algorithm \ref{bss}) and RVIA, we propose the POMDP-based transmission policy, which is summarized in Algorithm \ref{pomdpa}.
\begin{algorithm}
    \SetKwInOut{Inputi}{Initialize}
    %\KwIn{Input}{Input}
    \SetKwInOut{run}{RUN}
     \SetKwInOut{output}{Output}
     \SetKwInOut{Output}{Output}
     %\SetKwInOut{AuxV}{Auxiliary Variables}
     \SetKwComment{Comment}{/*}{ }
     \SetKwRepeat{Do}{do}{while}
     %\run{Algorithm \ref{bss}  $(N,B,m)$}
     %\Input{}
    %\KwIn{$\Phi^{\mathrm{aprox}}$, $\mathcal{L}$, $\mathcal{L}^{\mathrm{out}}$, }     
    %\Comment{We use auxiliary variables $ax1$ and $ax2$ to store intermediate values during the computation process.}
    % \AuxV{$ax1$, $ax2$: Used to store intermediate values during the computation process.}
    \Inputi{ $l_{\mathrm{ref}}$, $\epsilon$, $i=0$, set $h^0(l) =0$ for all $l\in\mathcal{L}$}
    %\Comment{RVI stopping criterion}
    \Do{$\displaystyle\max_{l\in\mathcal{L}}|h^{i}(l)-h^{i-1}(l)|\geq\epsilon$}{
     $i = i+1$\\
    \For{$
    \displaystyle
    l\in\mathcal{L}$}{
    $\displaystyle V^{i}(l) = \min_{a\in \mathcal{A}_l}\left[C(l) +\sum_{l'\in \mathcal{L}}\widetilde {\mathcal{P}}(l'\mid l,a)h^{i-1}(l')\right]$\\
    $h^{i}(l) = V^{i}(l)-V^{i}(l_{\mathrm{ref}})$\\
    %$ax2(a)=\infty$ for all $a\in\{0,1\}$
    } }
%\doWhile{$\displaystyle\max_{l\in\mathcal{L}}|h^{i}(l)-h^{i-1}(l)|\geq\epsilon$}
    \Comment{Generate a deterministic transmission policy}
    
    $\displaystyle \pi(l)=\argmin_{a\in \mathcal{A}_l}\left[C(l) +\sum_{l'\in \mathcal{L}}\widetilde {\mathcal{P}}(l'\mid l,a)h(l')\right],~{\forall l\in \mathcal{L}}$
    %$\pi^* = \pi$
    % %\Comment{An optimal policy for given $\bar \beta$}
    % \For{$l\in\mathcal{L}$}{$\pi^*(l)=\argmin_{a\in \mathcal{A}_l}\big[C(l) +\sum_{l'\in \mathcal{L}}\mathrm{Pr}(l' \mid l,a)h^i(l')\big]$}
 
    \KwOut{POMDP-based transmission policy $ \pi^*=\pi$, the optimal value $C^* = V(l_{\mathrm{ref}})$} 
    \caption{The POMDP-based transmission policy}
    \label{pomdpa}
\end{algorithm}

\subsection{Low-complexity Solution to the Belief-MDP Problem}\label{lxp}
The POMDP-based policy, presented in Algorithm \ref{pomdpa}, needs to explore all states and actions, and the size of the belief-state space grows exponentially with the number of source states and the battery's capacity. Therefore, in the following, we introduce two low-complexity (LC) policies.
\subsubsection{Energy-Agnostic LC Policy}
Inspired by the  Lyapunov drift-plus penalty method \cite{lyp}, we develop an LC dynamic transmission policy in which the belief-MDP problem is transformed into a sequence of per-slot optimization problems. In particular, at each slot $t$, the aim is to solve the following problem
\begin{equation}\label{Problem: low-complex_main}
        \amin_{a(t)\in\mathcal{A}_l} \quad  \mathbb{E} \{ C(t+1)\mid l(t)\},
\end{equation}
where ${\mathbb{E} \{ C(t+1)\mid l(t)\}}$ is the (conditional) expected cost at slot $t+1$ given by \eqref{Eq:expex}, whose derivation is presented in Appendix \ref{A:expex1}.
\begin{align}\label{Eq:expex}
    &\mathbb{E}\{C(t+1)\mid l(t)\} = d\big(X(t),X(t)\big)pp_s\mathbb{E}\{a(t)\mid l(t)\}\notag\\&+\sum_{i=1}^{N}d\big(X(t),i\big)\rho_i(t)p(1-p_s)\mathbb{E}\{a(t)\mid l(t)\}\notag\\&+ \sum_{j=1,j\neq X(t)}^{N}d\big(j,X(t)\big)qp_s\mathbb{E}\{a(t)\mid l(t)\} \notag\\&+ \sum_{i=1}^{N}\sum_{j=1,j\neq X(t)}^{N}\hspace{-0mm}d(j,i)\rho_i(t)q(1\!-\!p_s)\mathbb{E}\{a(t)\mid l(t)\} \notag\\&+ \sum_{i=1}^{N}d\big(X(t),i\big)\rho_i(t)p(1-\mathbb{E}\{a(t)\mid l(t)\})\notag\\&+  \sum_{i=1}^{N}\sum_{j=1,j\neq X(t)}^{N}d(j,i)\rho_i(t)q(1-\mathbb{E}\{a(t)\mid l(t)\}).  
\end{align}
We minimize \eqref{Eq:expex} following the approach of opportunistically minimizing a (conditional) expectation \cite{lyp}, i.e., \eqref{Eq:expex} is
minimized by dropping the expectations in each slot.
%To this end, we used the exhaustive search method. 
Since there are only two possible actions at each slot, the exhaustive search method is an appropriate method to solve the per-slot optimization problem. {If the resulting expected cost for both actions are the same, the action of staying idle, i.e., $a=0$, is selected.} Note that this policy does not require the truncation of the belief-state space.

The above proposed policy is agnostic to the energy arrival process, while the energy arrival process influences the long-term availability of energy for the transmitter's operation. Thus, ignoring the energy arrival rate, especially when the energy arrival rate is low, would lead to inefficient decision-making. To address this shortcoming, we propose an energy-aware LC policy, as detailed below.
\subsubsection{{Energy-Aware LC Policy}}\label{Ilxp}
To take the energy arrival process into account, we introduce a regularization term to add to the objective function of Problem \eqref{Problem: low-complex_main}. The regularization term is a function of the energy arrival process and is meant to promote less transmissions to save energy for future use. More precisely, the regularization term decreases the transmission rate when the energy arrival rate decreases, while its effect diminishes as the energy arrival rate increases. 
This term is defined as $-\gamma a(t)\big(e(t)-1\big)$, where $\gamma$ is a positive tuning parameter. Thus, the new per-slot optimization problem is formulated as
%\red{To take the energy arrival process into account, we define a cost as $-\gamma a(t)\big( e(t)-a(t) \big)$ and add it to the cost in \eqref{Problem: low-complex_main}, where $\gamma$ is a positive tuning parameter. The added cost promotes less transmission, especially when the energy arrival rate is low, to save the energy for future transmissions. Given the new cost function, i.e., $C(t+1)-\gamma a(t)\big( e(t)-a(t) \big)$, we solve the following problem at each slot $t$}
\begin{equation}\label{Problem: low-complex_main2}
        \amin_{a(t)\in\mathcal{A}_l} \quad  \mathbb{E} \{ C(t+1)-\gamma a(t)\big( e(t)-1 \big)\mid l(t)\},
\end{equation}
The conditional expectation ${\mathbb{E} \{-\gamma a(t)\big( e(t)-1 \big)\mid l(t)\}}$ is given as
\begin{align}\label{Eq:expex22}
    &\hspace{-1.5mm}\mathbb{E}\{-\gamma a(t)\big( e(t)-1 \big)\mid l(t)\}=- \gamma\mathbb{E}\{a(t)\mid l(t)\}(\mu-1),  
\end{align}
{where the equality followed from the fact that ${\mathbb{E}\{ e(t)\mid l(t)\}=\mu }$, and the action decision at slot $t$ is independent of the energy arrival during the slot $t$.
Thus, substituting \eqref{Eq:expex} and \eqref{Eq:expex22} into \eqref{Problem: low-complex_main2}, the expected cost function is given as}
\begin{align}\label{Eq:expex23}
    &\mathbb{E} \{ C(t+1)-\gamma a(t)\big( e(t)-1 \big)\mid l(t)\} \notag\\&= d\big(X(t),X(t)\big)pp_s\mathbb{E}\{a(t)\mid l(t)\}\notag\\&+\sum_{i=1}^{N}d\big(X(t),i\big)\rho_i(t)p(1-p_s)\mathbb{E}\{a(t)\mid l(t)\}\notag\\&+ \sum_{j=1,j\neq X(t)}^{N}d\big(j,X(t)\big)qp_s\mathbb{E}\{a(t)\mid l(t)\} \notag\\&+ \sum_{i=1}^{N}\sum_{j=1,j\neq X(t)}^{N}\hspace{-0mm}d(j,i)\rho_i(t)q(1\!-\!p_s)\mathbb{E}\{a(t)\mid l(t)\} \notag\\&+ \sum_{i=1}^{N}d\big(X(t),i\big)\rho_i(t)p(1-\mathbb{E}\{a(t)\mid l(t)\})\notag\\&+  \sum_{i=1}^{N}\sum_{j=1,j\neq X(t)}^{N}d(j,i)\rho_i(t)q(1-\mathbb{E}\{a(t)\mid l(t)\})\notag\\&
    - \gamma\mathbb{E}\{a(t)\mid l(t)\}(\mu-1).  
\end{align}

We follow the same approach to minimize \eqref{Eq:expex22} as we employed to minimize \eqref{Eq:expex}. In particular, the energy-agnostic LC policy is a special case of the energy-aware LC policy when the tuning parameter is set to zero, i.e., $\gamma=0$.

\section{Numerical Results}\label{snrs}
In this section, we evaluate the performance of the proposed policies: 1) {the} POMDP-based policy {presented in Algorithm~\ref{pomdpa}} and 2) {the} LC policies presented in Section \ref{lxp}. In the simulation results, we use the following distortion function: ${ d\big(X(t),\hat{X}(t)\big) = |X(t)-\hat{X}(t)| }$.

In the following, we first study the effect of truncation on the performance of the POMDP-based policy, then we study the structure of the proposed policies, and finally, we analyze the performance of the proposed policies with respect to different system parameters.
%Moreover, we compare the proposed policies to a baseline policy (baseline policy), in which the transmitter sends an update whenever the battery has enough energy, and the cost is not zero. 

\subsection{POMDP-based Policy}
% As presented in Section \ref{trpomdp}, to derive the POMDP-based policy, we use a truncated belief-state space. In addition, if the belief crosses the margin, it is approximated with the closest belief in the belief-state space by KL-divergence tool. 
% Let $z(x)$ and $v(x)$ denote two probability distributions of a discrete random variable $x$. Thus, $ {KL}\big(z(x)\|v(x)\big)$ is defined
% \begin{equation}
%     \begin{array}{ll}
%         {KL}\big(z(x)\|v(x)\big) = \sum_x z(x)\log \frac{z(x)}{v(x)}.
%     \end{array}
% \end{equation}
% Let $l^{\mathrm{out}} = \{X,b,\rho^{\mathrm{out}}\}$ be a belief-state out of the belief-state space. To approximate $l^{\mathrm{out}}$, we keep the first two elements, i.e., $X,b$, and calculate the KL-divergence between $\rho^{\mathrm{out}}$ and the belief in the belief-state space, i.e., $\sum_{\rho\in \mathcal{L}}\sum_{i=1}^N \rho^{\mathrm{out}}_i\log \frac{\rho^{\mathrm{out}}_i}{\rho_i}$. Then, we replace $\rho^{\mathrm{out}}$ with the one which gives the minimum KL-divergence.
In Fig. \ref{fig:statespace}, we evaluate the effect of belief-state truncation on the performance of the POMDP-based policy (presented in Algorithm \ref{pomdpa}) with $\epsilon=1e-4$. Fig.~\ref{fig:statespace} depicts the average cost versus the number of consecutive transmission attempts without receiving ACK, i.e., $m$, for different values of probabilities of successful reception in the forward and backward channels. As can be seen, when the channels' condition is bad, e.g., $p_s=p_f=0.2$, the performance of the derived policy converges slowly with $m$. This is because, with a small value of $p_s$ and $p_f$, the number of consecutive transmission attempts without receiving ACK increases, and thus, the truncated belief space needs to include states associated with large values of $m$ to be a good approximation for the actual belief space.
 However, when the probability of receiving ACK increases, the %performance of the derived policy 
 average cost $\bar{C}$
 converges quickly with $m$.
Fig.~\ref{fig:statespace} further shows that increasing $m$ beyond the value $m=6$ leads to negligible improvement in the objective function. Hence, a value of $m=6$ is a reasonable choice to truncate the belief-state space for the considered setting.
 
%The results show that for a sufficiently large $m$ (e.g., $m=4$ in Fig. \ref{fig:statespace}), the optimal value changes very slightly by further increasing $m$.
 
%The results show that for an appropriate choice of $m$, our truncation of the belief-state space does not alter the optimality of the derived policy. Note that we are interested in small values of $m$ that possibly provide the optimal solution.
\begin{figure}[!ht]
\centering
\begin{tikzpicture}[scale=1]
\begin{axis}[
    title={},
    %caption = {aaaa},
    xlabel={Number of slots without receiving ACK, $m$},
    ylabel={Average cost, $\bar C$},
    xmin=1, xmax=7,
    ymin=0.55, ymax=0.76,
    xtick={1,2,3,4,5,6,7},
    ytick={0.60,0.7,0.80},
    legend pos=north west,
    ymajorgrids=true,
    grid style=dashed,
    legend style={at={(0.35,0.85)},anchor=west},
    legend cell align=left,
]

\addplot[
    color=blue,
    mark=square,
    line width=1.3pt, % Increase the line width
    mark size=3pt,
    ]
    coordinates {
    (1,0.7565)(2,0.7433)(3,0.6898)(4,0.6833)(5,0.6806)(6,0.6783)(7,0.6768)
    };

    \addplot[
    color=red,
    mark=diamond,
    line width=1.3pt, % Increase the line width
    mark size=3pt,
    ]
    coordinates {
    (1,0.6667)(2,0.6654)(3,0.6648)(4,0.6579)(5,0.6553)(6,0.6541)(7,0.6538)
    };

        \addplot[
    color=black,
    mark=o,
    line width=1.3pt, % Increase the line width
    mark size=3pt,
    ]
    coordinates {
    (1,0.5714)(2,0.5664 )(3,0.5660 )(4,0.5657)(5,0.5656)(6,0.5655)(7,0.5655)
    };
    
    \legend{$p_s=0.2$ $p_f=0.2$, $p_s=0.4$ $p_f=0.4$, $p_s=0.6$ $p_f=0.6$}
\end{axis}
\end{tikzpicture}
  \caption{The average cost $\bar C$ versus $m$ for different reliabilities of the forward and feedback channels where the source is characterized as $p = 0.7$, $N = 3$ and the EH module is characterized as $\mu = 0.5$, $B = 3$.}\label{fig:statespace}
  \end{figure}

Fig. \ref{fig:pomdpiteration} illustrates the evolution of the average cost $\bar{C}$ as a function of time for different probabilities of successful reception $p_s$ under POMDP-based policy. From Fig. \ref{fig:pomdpiteration}, we see that the average cost decreases when $p_s$ increases. This is due to the fact that by improving the channel condition the number of successful receptions increases, and consequently, the number of slots at which the system is in an erroneous state decreases.
\begin{figure}
    \centering
    \includegraphics[width =9cm]{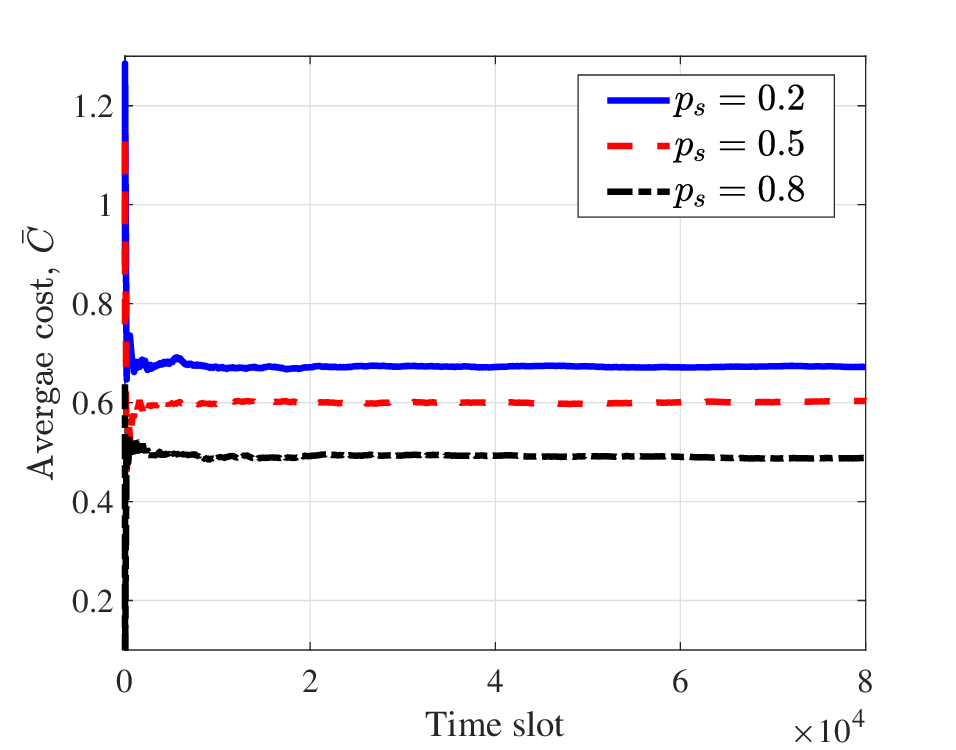}
    \caption{The (instantaneous) average cost $\bar C$ for different values of successful reception in the forward channel for the POMDP-based policy with respect to time slots where $p = 0.7$, $N = 3$, $\mu = 0.5$, $B = 3$, and $p_f = 0.7$.}
    \label{fig:pomdpiteration}
\end{figure}
\subsection{The Structure of Proposed Policies}
%In this subsection, we evaluate the structure of the POMDP-based policy. 
Fig. \ref{fig:streh} visualizes the structure of the POMDP-based policy for different values of the energy arrival rates $\mu$ where the source is binary, and the source state is $X=1$.
The plot is drawn with respect to the battery's energy level $b$, and the belief associated with $\hat X = 1$, i.e., $\rho_1$. As can be seen, the policy has a threshold structure\footnote{{A threshold structure in a policy refers to a decision rule where different actions are taken before and after a certain threshold.}} with respect to
%$\mu$, 
$b$ and $\rho_1$. For example, when $\mu = 0.2$ and $\rho_1 = 0.5598$, the action decision for $b\geq3$ is transmission $a=1$ and for $b<3$ is staying idle $a=0$. %Thus, the threshold value for the given setting and $\rho_1 = 0.5598$ with respect to the battery's energy level is $b = 3$.
In addition, Fig. \ref{fig:streh} shows that by increasing $\mu$, the number of belief states that the policy decides to transmit increases. 
%an increase in $\mu$ increases the number of belief-states that the policy decides transmission action.
This is because increasing $\mu$ results in having more energy in the system. 
%on average, more energy is available in the system. 

\begin{figure}
    \centering
    \includegraphics[width =9cm]{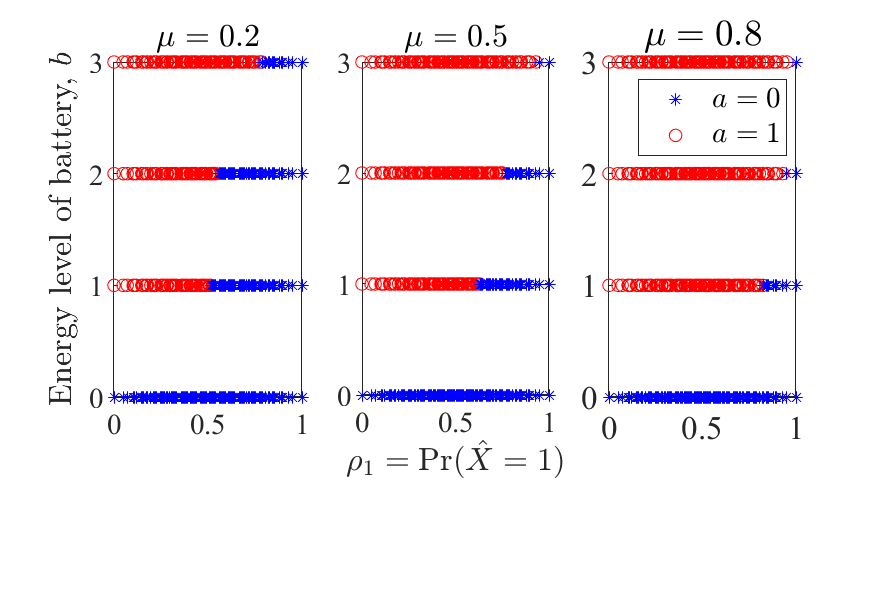}
    \vspace{-1.2cm}
    \caption{Structure of the POMDP-based policy for a binary source with respect to the belief and energy level of the battery $b$ for three different energy arrival rate $\mu$, where $X=1$, $p = 0.7$, $N = 2$,  channel reliabilities are defined as $p_s=0.6$, $p_f = 0.7$, and  $B = 3$.}
    \label{fig:streh}
\end{figure}

Fig. \ref{fig:strsd} visualizes the structure of the POMDP-based policy for different source self-transition probabilities $p$, where the source is binary, and the source state is $X=1$. The plot is drawn with respect to $b$ and $\rho_1$. The figure shows that when $p$ is small, e.g., $p= 0.5$, the policy tends not to transmit. This is because there is a high chance that the source would change its state at the next slot, and the currently transmitted status update cannot decrease the average distortion at the next slot. However, when $p$ increases, the policy tends to have transition attempts more frequently. It can also be seen that the policy has a threshold structure with respect to $b$ and $\rho_1$. For instance, when $p = 0.9$ and $\rho_1 = 0.8594$, the threshold value for the battery's energy level is $b = 2$. Moreover, the results show that when $p=q$ (e.g., $p=0.5$ and $q=0.5$), in line with \cite{maatouk2020age}, the optimal decision is to stay idle.
\begin{figure}
    \centering
    \includegraphics[width =9cm]{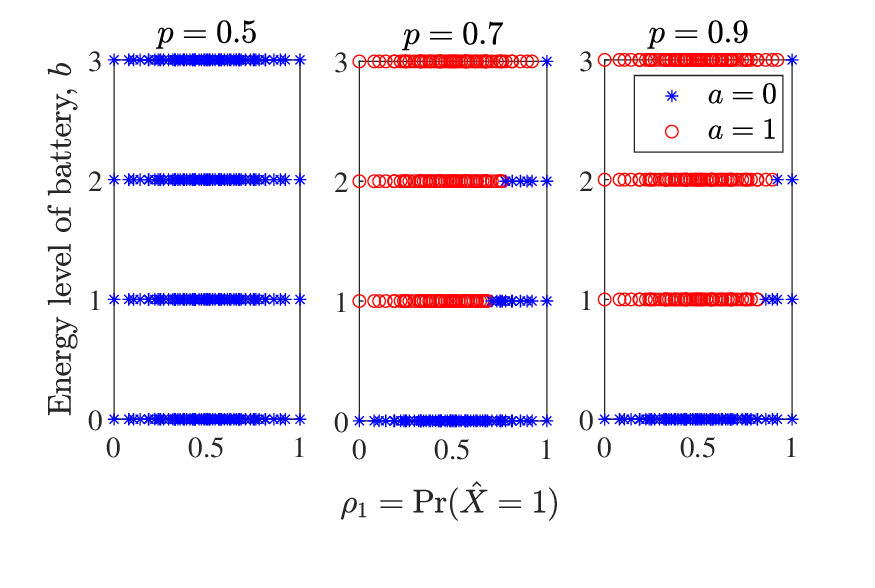}
    \vspace{-0.8cm}
    \caption{Structure of the POMDP-based policy for a binary source with respect to the belief and energy level of the battery $b$ for three different source's self-transition probability $p$, where $X=1$, $N = 2$, $p_s=0.4$, $p_f = 0.5$, $\mu = 0.7$, and $B = 3$.}
    \label{fig:strsd}
\end{figure}

Fig. \ref{fig:comstrlcp} demonstrates the structure of the LC policies for different source self-transition probabilities $p$, where the source is binary, and the source state is $X=1$. The plot is drawn with respect to $b$ and $\rho_1$. { Fig. \ref{fig:comstrlcp}(a) shows the structure of the energy-agnostic LC policy, and Fig. \ref{fig:comstrlcp}(b) shows the structure of the energy-aware LC policy.
As can be seen, the decision is independent of $b$ and changes with $p$ for both policies. For small values of $p$, the decision determined by the policies is to stay idle, while for a larger value of $p$, the decision is to send updates, which is the same as the decision determined by the POMDP-based policy (see Fig.~\ref{fig:strsd}). For $p>0.5$, the energy-agnostic LC policy sends updates when there is energy in the battery unless the expected distortion cost is zero, i.e., $\rho_1=1$. However, the energy-aware LC policy stays idle when the expected distortion cost is small (e.g., $\rho_1\geq0.7$). This is because the distortion cost is small, sending new updates does not reduce the cost significantly.} %for small values of $p$, the source's state changes more often, and in comparison to sending an update, there is a higher chance that the mismatch between the source and the sink is removed automatically.

%thus if the transmitter sends an update, the source most likely is going to change its state in the next slot, which

% \begin{figure}[h]
%     \centering
%     \begin{minipage}{8.5cm}
%         \centering        \includegraphics[width=\textwidth]{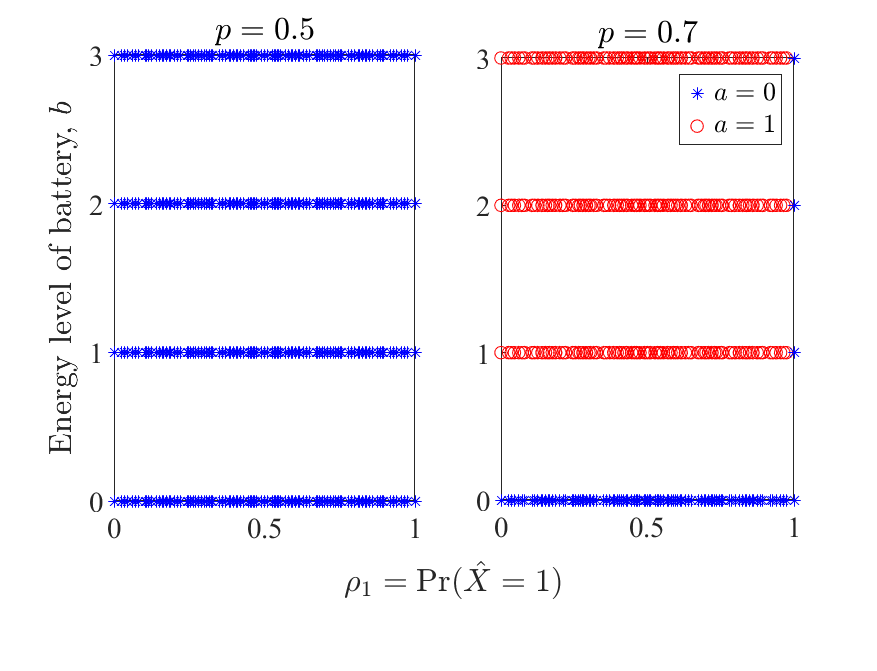} 
%          \footnotesize{\vspace{-5pt}(a) Energy-agnostic LC policy}
%         \label{fig:subfig1}
%     \end{minipage}
%     \hfill
%     \begin{minipage}{8.5cm}
%         \centering
%         \includegraphics[width=\textwidth]{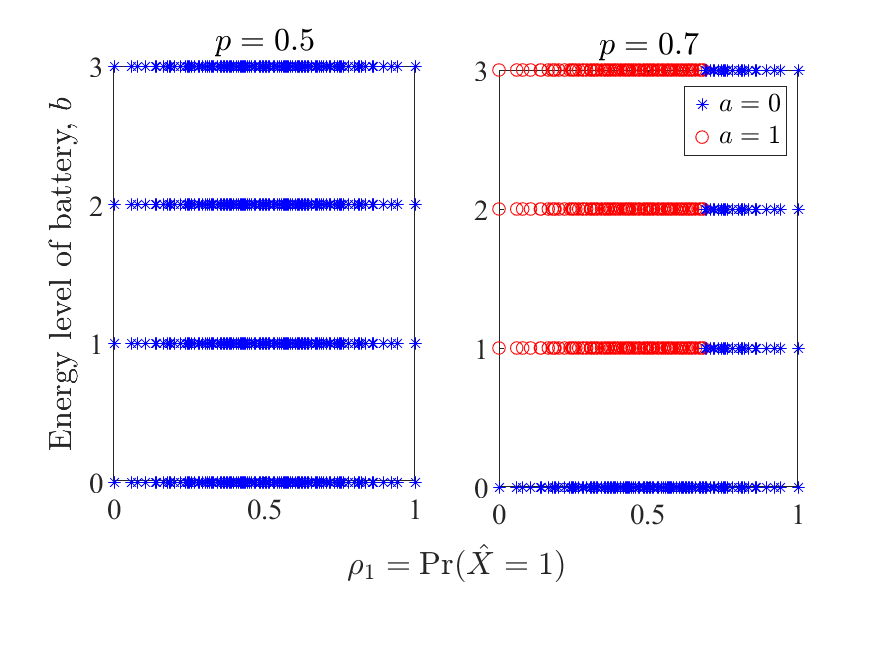} { \footnotesize{(b) Energy-aware LC policy}}
%         \label{fig:subfig2}
%     \end{minipage}
%     \caption{Structure of the LC policies for a binary source with respect to the belief and energy level of the battery $b$ for three different source's self-transition probability $p$, where $X=1$, $N = 2$, $p_s=0.6$, $p_f = 0.5$, $\mu = 0.7$, and $B = 3$.}
%     \label{fig:comstrlcp}
% \end{figure}

\begin{figure}[h]
    \centering
    \begin{subfigure}[t]{0.45\textwidth}
        \centering
        \includegraphics[width=\textwidth]{Image/policy-lowcomplex-new-5.eps}
        \vspace{-1.1cm}
        \caption{Energy-agnostic LC policy}
        \label{fig:subfig1}
    \end{subfigure}
    \hfill
    \begin{subfigure}[t]{0.45\textwidth}
        \centering
        \includegraphics[width=\textwidth]{Image/policy-lowcomplex-new-6.eps}
        \vspace{-1.1cm}
        \caption{Energy-aware LC policy}
        \label{fig:subfig2}
    \end{subfigure}
    \caption{Structure of the LC policies for a binary source with respect to the belief and energy level of the battery $b$ for three different source's self-transition probability $p$, where $X=1$, $N = 2$, $p_s=0.6$, $p_f = 0.5$, $\mu = 0.7$, and $B = 3$.}
    \label{fig:comstrlcp}
\end{figure}

% \begin{figure}
%     \centering    \includegraphics[width =9cm]{}   \vspace{-3mm}\caption{Structure of the low-complexity policy with respect to the source's self-transition probability $p$, and the probability of successful reception in forward channel $p_s$ for belief-state ${l=(X=1,b\geq 1, \rho = [0,1,0],f_{-1}=1,X_{-1}=2)}$ where ${N = 3}$, $B =3$.}
%     \label{fig:strlx}
% \end{figure}
\subsection{Performance Comparison of the Proposed Policies}
In this subsection, we compare the proposed policies, i.e., the POMDP-based policy presented in Section \ref{trpomdp} and LC policies presented in Section \ref{lxp}, against each other and two baseline policies. According to the first baseline, the transmitter sends an update whenever the battery's energy level is not zero, referred to as battery-only (BO) policy.
In the second baseline, the transmitter sends an update whenever the battery's energy level is not zero, and the expected cost in \eqref{bmcost} is not zero, referred to as battery-only with redundancy check (BO-RC) policy.

Fig. \ref{fig:compsd} depicts the average cost $\bar C$ of different policies versus the source's self-transition probability  $p$. As it can be seen for small values of $p$, e.g., $p = 0.4$ or $p = 0.5$, the LC policies and the POMDP-based policy {perform almost the same, and outperform the baseline policies.} This is because, for small values of $p$, the POMDP-based and the LC policies transmit less frequently, while the baseline policies send updates more frequently. Actually, for a high-dynamic source, there is a high chance that the source will change its state in the next slot, and the currently transmitted status update cannot decrease the average cost at the next slot.
{For $p=0.6$, the energy-agnostic LC police's performance becomes the same as the BO-RC policy; this is because, as shown in Fig. \ref{fig:comstrlcp}(a), the energy-agnostic LC policy starts to send updates for large $p$ whenever the battery has energy and the expected distortion cost is not zero, which is the main idea behind the BO-RC policy.
For large values of $p$ (e.g., $p = 0.7,p = 0.8, p = 0.9$), the performance of the BO-RC and energy-agnostic LC policies are very close to the POMDP-based and energy-aware LC policies and outperform the BO policy. This is because when the source state changes slowly, i.e., for large values of $p$, and the channel condition is good, most of the time the expected cost is zero, thus, these policies do not send updates frequently which saves energy for future transmissions. However, the BO policy sends updates independent of the cost value.}

%requiring less transmission from the baseline policy due to a few mismatches between the source and the sink. Given less transmission and high energy arrival, the transmitter always has enough energy, and making suboptimal decisions does not degrade the baseline policy performance.

\begin{figure}[!ht]
\centering
\begin{tikzpicture}[scale=1]
\begin{axis}[
    title={},
    %caption = {aaaa},
    xlabel={Source self transition probability, $p$},
    ylabel={Average cost, $\bar C$},
    xmin=0.4, xmax=0.9,
    ymin=0.19, ymax=0.9,
    xtick={0.4,0.5,0.6,0.7,0.8,0.9},
    ytick={0.2,0.3,0.4,0.5,0.60,0.7,0.80,0.9},
    legend pos=north west,
    ymajorgrids=true,
    grid style=dashed,
    legend style={at={(0.05,0.25)},anchor=west},
    legend cell align=left,
]

    \addplot[
    color=Orange,
    mark=o,
    line width=1.3pt, % Increase the line width
    mark size=3pt,
    ]
    coordinates {                       
    (0.4,0.8595)(0.5,0.8086)(0.6,0.7401)(0.7,0.6494)(0.8, 0.5231)(0.9, 0.3290)
    };
    \addplot[
    color=black,
    mark=triangle,
    line width=1.3pt, % Increase the line width
    mark size=3pt,
    ]
    coordinates {   
    (0.4,0.8556)(0.5,0.7905)(0.6,0.7028)(0.7,0.5851)(0.8, 0.4251)(0.9,  0.2255)
    };
    \addplot[
    color=red,
    mark=diamond,
    line width=1.3pt, % Increase the line width
    mark size=3pt,
    ]
    coordinates {
    (0.4,0.6668)(0.5,0.6675)(0.6,0.7010)(0.7, 0.5842)(0.8,  0.4246)(0.9, 0.2272)
    };

    \addplot[
    color=green,
    mark=star,
    line width=1.3pt, % Increase the line width
    mark size=3pt,
    ]
    coordinates {
    (0.4,0.6663)(0.5,0.6663)(0.6,0.6650)(0.7,0.5739)(0.8,0.4224)(0.9,0.2252)
    };
    
\addplot[
    color=blue,
    mark=square,
    line width=1.3pt, % Increase the line width
    mark size=3pt,
    ]
    coordinates {(0.4,6.676e-01)(0.5,6.676e-01)(0.6,6.66e-01)(0.7,5.668e-01)(0.8,4.188e-01)(0.9,2.26e-01)
    };    
    \legend{BO policy, BO-RC policy, Energy-agnostic LC policy, Energy-aware LC policy, POMDP-based policy}

%     \addplot[
%     color=black,
%     mark=o,
%     line width=1.3pt, % Increase the line width
%     mark size=3pt,
%     ]
%     coordinates {
%     (0.4,8.480e-01)(0.5,7.542e-01)(0.6,6.605e-01)(0.7,5.340e-01)(0.8, 3.851e-01)(0.9, 2.111e-01)
%     };

%     \addplot[
%     color=red,
%     mark=diamond,
%     line width=1.3pt, % Increase the line width
%     mark size=3pt,
%     ]
%     coordinates {
%     (0.4,6.681e-01)(0.5,6.669290e-01)(0.6,6.595e-01)(0.7,5.341e-01)(0.8, 3.851e-01)(0.9, 2.111e-01)
%     };

% \addplot[
%     color=blue,
%     mark=square,
%     line width=1.3pt, % Increase the line width
%     mark size=3pt,
%     ]
%     coordinates {
%     (0.4,6.667176e-01)(0.5,6.667408e-01)(0.6,6.463e-01)(0.7,5.293e-01)(0.8,3.851e-01)(0.9,2.111e-01)
%     };    
%     \legend{Baseline policy, Low-complexity policy, POMDP-based policy}
    
\end{axis}
\end{tikzpicture}
  \caption{The average cost $\bar C$ of different policies with respect to $p$ where $N = 3$, $p_s=0.6$, $p_f = 0.6$, $\mu = 0.5$, and $B = 3$.}\label{fig:compsd}
  \end{figure}

%Fig. \ref{fig:compenergy} shows the average cost $\bar C$ versus the energy arrival rate $\mu$ for different policies. As can be seen, the POMDP-based and low-complexity policies perform very closely and outperform the baseline policy. This is because the source is highly dynamic and thus POMDP-based and low-complexity policies transmit less frequently, as discussed in Fig. \ref{fig:compsd}, while the baseline policy sends updates more frequently. However, the performance of the baseline policy becomes closer to that of other policies with higher energy arrival rates. This is because the ineffective use of energy by the baseline policy is compensated by high energy arrival rates.

%in cases where the energy arrival rates $\mu$ are small, there are fewer transmission opportunities, and the baseline policy utilizes these opportunities less effectively than the other policies.
Fig. \ref{fig:compenergy} shows the average cost $\bar C$ versus the energy arrival rate $\mu$ for different policies. As it can be seen, for small values of $\mu$, the POMDP-based policy outperforms other policies, while for large values of $\mu$, the performance of the policies except for the BO policy is very close. This is because in cases where the energy arrival rates $\mu$ are small, there are fewer transmission opportunities, and the other policies utilize these opportunities less effectively than the POMDP-based policy. {Moreover, the energy-aware LC policy outperforms the energy-agnostic LC, BO-RC, and BO policies. This is because the energy-aware LC policy uses energy more effectively.}
\begin{figure}[!ht]
\centering
\begin{tikzpicture}[scale=1]
\begin{axis}[
    title={},
    %caption = {aaaa},
    xlabel={Energy arrival rate, $\mu$},
    ylabel={Average cost, $\bar C$},
    xmin=0.2, xmax=0.8,
    ymin=0.5, ymax=0.83,
    xtick={0.2,0.3,0.4,0.5,0.6,0.7,0.8},
    ytick={0.60,0.7,0.80},
    legend pos=north west,
    ymajorgrids=true,
    grid style=dashed,
    legend style={at={(0.25,0.780)},anchor=west},
    legend cell align=left,
]

        \addplot[
    color=orange,
    mark=o,
    line width=1.3pt, % Increase the line width
    mark size=3pt,
    ]
    coordinates {
    (0.2, 7.762006e-01)
(0.3, 7.303979e-01)
(0.4, 6.848223e-01)
(0.5, 6.500065e-01)
(0.6, 6.166515e-01)
(0.7, 5.862279e-01)
(0.8, 5.602844e-01)
    };

    \addplot[
    color=black,
    mark=triangle,
    line width=1.3pt, % Increase the line width
    mark size=3pt,
    ]
    coordinates {
    (0.2, 7.434474e-01)
(0.3, 6.764932e-01)
(0.4, 6.256634e-01)
(0.5, 5.839108e-01)
(0.6, 5.532922e-01)
(0.7, 5.315977e-01)
(0.8, 5.197531e-01)
    };
    \addplot[
    color=red,
    mark=diamound,
    line width=1.3pt, % Increase the line width
    mark size=3pt,
    ]
    coordinates {
    (0.2, 7.434474e-01)
(0.3, 6.764932e-01)
(0.4, 6.23e-01)
(0.5, 5.823e-01)
(0.6, 5.519e-01)
(0.7, 5.315977e-01)
(0.8, 5.197531e-01)
    };
\addplot[
    color=green,
    mark=star,
    line width=1.3pt, % Increase the line width
    mark size=3pt,
    ]
    coordinates {
    (0.2, 6.682802e-01)
(0.3, 6.590e-01)
(0.4, 6.132820e-01)
(0.5, 5.767682e-01)
(0.6, 5.484964e-01)
(0.7, 5.305090e-01)
(0.8, 5.196623e-01)
    };

\addplot[
    color=blue,
    mark=square,
    line width=1.3pt, % Increase the line width
    mark size=3pt,
    ]
    coordinates {
    (0.2, 6.548869e-01)
(0.3, 6.243033e-01)
(0.4, 5.934978e-01)
(0.5, 5.667805e-01)
(0.6, 5.447e-01)
(0.7, 5.296e-01)
(0.8, 5.19e-01)
    };

    \legend{BO policy, BO-RC policy, Energy-agnostic LC policy, Energy-aware LC policy, POMDP-based policy}
    
\end{axis}
\end{tikzpicture}
  \caption{The average cost $\bar C$ of different policies with respect to the energy arrival rate $\mu$, where $p = 0.7$, $N = 3$, $p_f = 0.6$, $p_s=0.6$, and $B = 3$.}\label{fig:compenergy}
  \end{figure}

Fig. \ref{fig:compn} demonstrates the average cost versus the number of source's states $N$ under different policies. From the figure, the average cost increases with $N$, as expected. Moreover, the POMDP-based and energy-aware LC policies outperform other policies, while for large $N$ (e.g., $N=4$ and $N=5$), the POMDP-based policy outperforms the energy-aware LC policy. 
This is because increasing the number of states, in general, leads to a larger estimation error. Consequently, the system needs to use the available energy effectively by considering the state of the system in future time slots. However, except for the POMDP-based and energy-aware LC policies, the other policies are agnostic to the energy arrival rate.

\begin{figure}[!ht]
\centering
\begin{tikzpicture}[scale=1]
\begin{axis}[
    title={},
    %caption = {aaaa},
    xlabel={Number of source states, $N$},
    ylabel={Average cost, $\bar C$},
    xmin=2, xmax=5,
    ymin=0.4, ymax=1.4,
    xtick={2,3,4,5},
    ytick={0.4,0.60,0.80,1,1.2,1.4},
    legend pos=north west,
    ymajorgrids=true,
    grid style=dashed,
    legend style={at={(0.03,0.78)},anchor=west},
    legend cell align=left,
]

    \addplot[
    color=orange,
    mark=o,
    line width=1.3pt, % Increase the line width
    mark size=3pt,
    ]
    coordinates {
    (2,0.4632)(3,0.7750)(4,1.0598)(5,1.3323)
    };

    \addplot[
    color=black,
    mark=triangle,
    line width=1.3pt, % Increase the line width
    mark size=3pt,
    ]
    coordinates {
    (2,0.4598)(3,0.7645)(4,1.0525)(5,1.3253)
    };

\addplot[
    color=red,
    mark=diamond,
    line width=1.3pt, % Increase the line width
    mark size=3pt,
    ]
    coordinates {
    (2,0.4580)(3,0.7577)(4,1.0538)(5,1.3224)
    };

\addplot[
    color=green,
    mark=star,
    line width=1.3pt, % Increase the line width
    mark size=3pt,
    ]
    coordinates {
    (2,0.4462)(3,0.6660)(4,0.9882)(5,1.1994)
    };

\addplot[
    color=blue,
    mark=square,
    line width=1.3pt, % Increase the line width
    mark size=3pt,
    ]
    coordinates {
    (2,0.4449)(3,0.6558)(4,0.8993)(5,1.1304)
    };

    %      \addplot[
    %  color=black,
    %  mark=o,
    %  ]
    %  coordinates {
    %  (2,0.5)(3,0.6651)(4,1.17)(5,1.4818)
    %  };

    %  \addplot[
    %  color=green,
    %  mark=o,
    %  ]
    %  coordinates {
    %  (2,0.5)(3,0.8445)(4,1.18)(5,1.4841)
    %  };
    % \addplot[
    %  color=black,
    %  mark=o,
    %  ]
    %  coordinates {
    %  (2,0.5)(3,0.6667)(4,0.9650)(5,1.2)
    %  };
    
    %\legend{ Dynamic $p_s=0.4$, Dynamic $p_s=0.5$, Dynamic $p_s=0.6$, Dynamic $p_s=0.7$}
    \legend{ BO policy, BO-RC policy, Energy-agnostic LC policy, Energy-aware LC policy, POMDP-based policy}%, low-complexitypolicy $p=0.5$, baseline policy $p=0.5$, POMDP-based policy $p=0.5$}
    
\end{axis}
\end{tikzpicture}
  \caption{The average cost $\bar C$ of different policies with respect to numbers of source's state $N$, where $p=0.7$, $N = 3$, $p_s=0.6$, $p_f = 0.2$, $\mu = 0.2$, and $B = 3$.}\label{fig:compn}
  \end{figure}
\indent
In Fig. \ref{fig:comppf}, we examine the impact of the probability of successful reception of ACK $p_f$ for different policies. As can be seen, the average cost decreases by an increase in $p_f$ except for the BO policy. {This is because the number of slots that the sink is observable by the transmitter increases, which in turn, reduces the uncertainty in the system, leading to effective decision-making for those policies that consider the value of the belief. However, the BO policy makes decisions only based on the energy level of the battery, and thus, the reduction in the uncertainty of the system does not change its performance.}
%For example, the performance of the policies is relatively close for $p_f=1$. This is due to the fact that the baseline and low complexity policies can observe the sink and send an update only if there is a mismatch between the source and sink, which prevents wasting energy. 
In general, this figure demonstrates 1) the effectiveness of the proposed POMDP-based policy under poor feedback channel conditions and 2) its applicability for low-resource wireless networks, where having high-reliable communication is more costly. 

\begin{figure}[!ht]
\centering
\begin{tikzpicture}[scale=1]
\begin{axis}[
    title={},
    xlabel={Probability of success in feedback channel, $p_f$},
    ylabel={Average cost, $\bar C$},
    xmin=0.05, xmax=1,
    ymin=0.63, ymax=0.72,
    xtick={0.05,0.25,0.5,0.75,1},
    scaled x ticks=false,
    tick label style={/pgf/number format/fixed, /pgf/number format/precision=2}, % Added precision
    ytick={0.63,0.65,0.67,0.7,0.72},
    legend pos=north west,
    ymajorgrids=true,
    grid style=dashed,
    legend style={at={(0.26,0.7)},anchor=west},
    legend cell align=left,
]

    \addplot[
    color=orange,
    mark=o,
    line width=1.3pt, % Increase the line width
    mark size=3pt,    % Increase the marker size
    ]
    coordinates {
    (0.05,0.7138)(0.25,0.7138)(0.5,0.7138)(0.75,0.7138)(1,0.7138)
    };    
    \addplot[
    color=black,
    mark=triangle,
    line width=1.3pt, % Increase the line width
    mark size=3pt,    % Increase the marker size
    ]
    coordinates {
    (0.05,0.7119)(0.25,0.6997)(0.5,0.6831)(0.75,0.6668)(1,0.6325)
    };

    \addplot[
    color=red,
    mark=diamond,
    line width=1.3pt, % Increase the line width
    mark size=3pt,
    ]
    coordinates {
    (0.05,0.7110)(0.25,0.6982)(0.5,0.6822)(0.75,0.6667)(1,0.6324)
    };

        \addplot[
    color=green,
    mark=star,
    line width=1.3pt, % Increase the line width
    mark size=3pt,
    ]
    coordinates {
    (0.05,0.6658)(0.25,0.6656)(0.5,0.6650)(0.75,0.6598)(1,0.6312)
    };
    %     coordinates {
    % (0.05,0.6658)(0.1,0.6668)(0.25,0.6656)(0.5,0.6650)(0.75,0.6598)(1,0.6312)
    % };

\addplot[
    color=blue,
    mark=square,
    line width=1.3pt, % Increase the line width
    mark size=3pt,
    ]
    coordinates {
    (0.05,0.66)(0.25,0.66)(0.5,0.655)(0.75,0.645)(1,0.63)
    };

    %\legend{ Dynamic $p_s=0.4$, Dynamic $p_s=0.5$, Dynamic $p_s=0.6$, Dynamic $p_s=0.7$}
    \legend{ BO policy, BO-RC policy, Energy-agnostic LC policy, Energy-aware LC policy, POMDP-based policy}%, low-complexity policy $p=0.5$, baseline policy $p=0.5$, POMDP-based policy $p=0.5$}
    
\end{axis}
\end{tikzpicture}
  \caption{The average cost $\bar C$ of different policies with respect to the probability of successful reception of ACK $p_f$, where $p=0.7$, $N = 3$, $p_s=0.4$, $\mu = 0.5$, and $B = 3$.}\label{fig:comppf}
  \end{figure}
\indent
 {Finally,} Fig. \ref{fig:compB} depicts the average cost with respect to the battery's capacity for different energy arrival rates. Here, without loss of generality, we use the POMDP-based policy. As expected, the average cost decreases by an increase in $B$. This is because a larger capacity of the battery allows the transmitter to store more energy, compensating for periods of inconsistent energy reception and ensuring continuous operation.
The influence of the battery's capacity on performance diminishes with increasing energy arrival rate. This is because, at a higher energy arrival rate, the transmitter receives energy more frequently. %Even if the battery can't store all the energy received in consecutive slots, the likelihood of receiving more energy soon after depletion is higher. 
Therefore, the system can rely more on frequent energy arrival rather than large battery capacities.
%Furthermore, the effect of the battery's capacity on the performance decreases with an increase in energy arrival rate. %For example, for $\mu=0.7$, increasing $B$ from $3$ to $9$ decreases the average cost very slightly, while for $\mu=0.3$, increasing $B$ from $3$ to $9$ decreases the average cost by about $0.05$.
%This is because, with a higher energy arrival rate, the transmitter receives energy more often (most of the slots), and if the battery does not have the ability to store the energy when it receives energy in several consecutive slots, when the battery becomes empty most likely it would receive energy again.

\begin{figure}[!ht]
\centering
\begin{tikzpicture}[scale=1]
\begin{axis}[
    title={},
    %caption = {aaaa},
    xlabel={The battery's capacity, $B$},
    ylabel={Average cost, $\bar C$},
    xmin=1, xmax=9,
    ymin=0.5, ymax=0.67,
    xtick={1,3,6,9},
    ytick={0.5,0.55,0.6,0.65,0.7},
    legend pos=north west,
    ymajorgrids=true,
    grid style=dashed,
    legend style={at={(0.65,0.85)},anchor=west},
    legend cell align=left,
]
\addplot[
    color=blue,
    mark=square,
    line width=1.3pt, % Increase the line width
    mark size=3pt,
    ]
    coordinates {
    (1,0.67)(3,0.62)(6,0.592)(9,0.58)
    };

    \addplot[
    color=red,
    mark=diamond,
    line width=1.3pt, % Increase the line width
    mark size=3pt,
    ]
    coordinates {
    (1,0.61)(3,0.56)(6,0.54)(9,0.53)
    };

    \addplot[
    color=black,
    mark=o,
    line width=1.3pt, % Increase the line width
    mark size=3pt,
    ]
    coordinates {
    (1,0.565)(3,0.525)(6,0.515)(9,0.513)
    };

    %\legend{ Dynamic $p_s=0.4$, Dynamic $p_s=0.5$, Dynamic $p_s=0.6$, Dynamic $p_s=0.7$}
    \legend{  $\mu = 0.3$, $\mu = 0.5$,$\mu = 0.7$}%, low-complexitypolicy $p=0.5$, baseline policy $p=0.5$, POMDP-based policy $p=0.5$}
    
\end{axis}
\end{tikzpicture}
  \caption{The average cost $\bar C$ for different values of the battery's energy arrival $\mu$ with respect to the battery's capacity $B$, where $p=0.7$, $N = 3$, $p_s=0.6$, and ${p_f = 0.7}$.}\label{fig:compB}
  \end{figure}

In Table \ref{TC}, we compare the computational complexity of the proposed policies in the offline phase, i.e., the initial processing time to find a policy, and 2) in the online phase, i.e., running time to find the optimal action at each slot. The computational complexity of the RVIA is at most $\mathcal{O}(|\mathcal{A}_l||\mathcal{L}|^2)$ \cite{zakeri2023minimizing}, where $|\mathcal{A}_l|$ is the size of action space and $|\mathcal{L}|$ is the size of belief-state space. The size of the action space is $2$. If we start from a reset belief, after one transmission attempt without receiving ACK, i.e., $m=1$, there are $N-1$ possible beliefs, and for $m = 2$, there are $N$ possible beliefs. This procedure can start from any reset belief. Thus, the number of beliefs in the belief space is ${N(N-1)N^{m-1} = (N-1)N^{m}}$. Considering the other elements of the belief-state, the size of the belief-state space becomes $2B(N-1)N^{m+2}$. Therefore, the computational complexity for the offline phase of the POMDP-based policy is $\mathcal{O}([B(N-1)N^{m+2}]^2)$. In the online phase, the policy needs to find the action from a state-action table, thus the computational complexity is ${\mathcal{O}(1)}$.
The energy-agnostic LC policy does not have an offline phase. In the online phase, it needs to compare the cost associated with two actions, thus the computational complexity is ${\mathcal{O}(1)}$.
{According to the energy-aware LC policy, one needs to find the best $\gamma$ that provides the minimum average cost, and thus, the complexity for the offline phase is ${\mathcal{O}(1)}$. In the online phase, it is required to compare the cost associated with the two actions, and thus, the computational complexity is ${\mathcal{O}(1)}$.}
 \begin{table}[t]
 \caption{Computational complexity of the proposed policies}
 \centering
 \resizebox{\columnwidth}{!}{ % Adjusts the table to fit within the column width
 \begin{tabular}{| c | c | c |}
    \hline
    \textbf{Policy} & \textbf{Offline Phase} & \textbf{Online Phase} \\
    \hline
    POMDP-based & \small$\mathcal{O}([B(N-1)N^{m+2}]^2)$ & \small$\mathcal{O}(1)$ \\
    \hline
    Energy-agnostic LC & - & \small$\mathcal{O}(1)$ \\
    \hline
    Energy-aware LC & \small$\mathcal{O}(1)$ & \small$\mathcal{O}(1)$ \\
    \hline
 \end{tabular}
 } % End of resizebox
 \label{TC}
\end{table}

\section{Conclusions}\label{clc}
We considered an energy harvesting status update system monitoring a finite-state Markov source where both forward and feedback channels are error-prone. We addressed the real-time tracking problem by developing policies that minimize the long-term time average of a distortion function subject to an energy constraint imposed by the energy harvesting circuit. We modeled the stochastic problem as a POMDP and then cast it as a belief-MDP problem. By truncating the belief-state space, we proposed a transmission policy using RVIA. Furthermore, we proposed an energy-agnostic LC policy by transforming the belief-MDP problem into a sequence of per-slot problems. Then, we extended the energy-agnostic LC policy to an energy-aware LC policy by adding a regularization term to the objective function of the per-slot problems. We numerically examined the performance and structure of the proposed policies. The results showed that the proposed policies have a switching-type structure. The structures suggest that when the source dynamic is high, the optimal action is to send updates less frequently. In addition, we observed that
{the energy-aware LC policy performs relatively close to the POMDP-based policy. Also, for scenarios where the energy arrival rate is high and the source dynamic is low, the energy-agnostic LC policy performs close to the POMDP-based policy.}
%\blue{we observed that when the source state changes more frequently, 1) the POMDP-based policy decides to transmit status updates less frequently (for $p=q$ it decides to stay idle) and 2) the low-complexity policy performs close to the POMDP-based policy.}
%We observed that, for some scenarios, the performance of the low-complexity policy is very close to the RVI-based policy, illustrating its potential for complex scenarios.
%In addition, the results showed that in scenarios where $p=q$ (e.g., see Fig. \ref{fig:strsd} for $p=0.5$ ), the optimal policy is to stay idle.

%{Moreover, the results showed that for scenarios with very scarce resources, the POMDP-based policy is especially promising compared to the baseline policy.}

\appendices
%\vspace{-0.3cm}
\section{Proof of Proposition \ref{pro1}}\label{ap-pro1}
    To show that staying in the most recent successful status update is the optimal ML estimator when $p > q$, we must prove that after $K$ transitions, the probability of self-transition is at least as large as the probability of transitioning to other states. We use the induction approach. We first show that it is valid for one transition. Then, considering that it holds for ${K-1}$ transitions, we prove it holds for ${K}$ transitions.

%    Let $P$ denote the state transition matrix of the source, which is a squared matrix with diagonal elements equal to $p$ and non-diagonal elements equal to $q$. The one transition $P^2$ is derived as a squared matrix with diagonal elements equal to ${R_{1,1} = p^2 + (N-1)q^2}$ and non-diagonal elements equal to ${R_{1,2} = 2pq + (N-2)q^2}$.
Let $P$ denote the state transition matrix of the source, given as
    \begin{align}
        %\begin{array}{ll}
             P = \begin{bmatrix}
                    p & q&\dots &q\\
                    q & p& & \vdots\\
                    \vdots & &\ddots & q\\
                    q &\dots & q& p
                \end{bmatrix}.
        %\end{array}
    \end{align}
Note that $P$ is a doubly stochastic matrix, i.e., a square matrix with non-negative entries between $0$ and $1$, where each row and column sums up to $1$. 
%     given as
% \[
% P = \begin{bmatrix}
% p & q & \cdots & q \\
% q & p & \cdots & q \\
% \vdots & \vdots & \ddots & \vdots \\
% q & q & \cdots & p
% \end{bmatrix}.
% \]
% Note that $P$ is a doubly stochastic matrix, i.e., a square matrix with non-negative entries between $0$ and $1$, where each row and column sums up to $1$. 
The one transition $P^2$ is given as
    \begin{equation}
        \begin{array}{ll}
             P^2 = P\times P = \begin{bmatrix}
                    R_{1,1} & R_{1,2}&\dots & R_{1,2}\\
                    R_{1,2} & R_{1,1}& & \vdots\\
                    \vdots & &\ddots & R_{1,2}\\
                    R_{1,2} &\dots & R_{1,2}& R_{1,1}
                \end{bmatrix},
        \end{array}
    \end{equation}
where ${R_{1,1} = p^2 + (N-1)q^2}$ and ${R_{1,2} = 2pq + (N-2)q^2}$. 
Now, we calculate $R_{1,1}-R_{1,2}$ 
\begin{align}
    %\begin{array}{ll}
        R_{1,1}-R_{1,2} &= p^2 + (N-1)q^2 - 2pq - (N-2)q^2\notag\\
        &= p^2 - 2pq + q^2=(p-q)^2.\notag
    %\end{array}
\end{align}
As $(p-q)^2$ is always larger and equal to zero, we can conclude that $R_{1,1}\geq R_{1,2}$. 

% The three steps transition $P^3$ is given as
%     \begin{equation}
%         \begin{array}{ll}
%              P^3 = P\times P^2 = \begin{bmatrix}
%                     R_{3,1} & R_{3,2}&\dots &R_{3,2}\\
%                     R_{3,2} & R_{3,1}& & \vdots\\
%                     \vdots & &\ddots & R_{3,2}\\
%                     R_{3,2} &\dots & R_{3,2}& R_{3,1}
%                 \end{bmatrix},
%         \end{array}
%     \end{equation}
% where ${R_{3,1} = pR_{2,1} + (N-1)qR_{2,2}}$ and ${R_{3,2} = pR_{2,2} +qR_{2,1} + (N-2)qR_{2,2}}$, and $R_{3,1}-R_{3,2}$ is calculated as
% \begin{align}
%     %\begin{array}{ll}
%         R_{3,1}-R_{3,2} &= pR_{2,1} + (N-1)qR_{2,2} - pR_{2,2} \notag\\&\hspace{2mm}-qR_{2,1} - (N-2)qR_{2,2}\\
%         &= (p-q)R_{2,1} -(p-q)R_{2,2}.\notag
%     %\end{array}
% \end{align}
% As $p\geq q$ and $R_{2,1}\geq R_{2,2}$, we can conclude that $R_{3,1}-R_{3,2}\geq 0$, and thus $R_{3,1}\geq R_{3,2}$.

Now, we assume the probability of self-transition is larger and equal to the transition to other states after $K-1$ transition and show it holds for $K$ transitions. The $K$ transitions  $P^{K+1}$ is %derived as a squared matrix with diagonal elements equal to ${R_{K,1} = pR_{K-1,1} + (N-1)qR_{K-1,2}}$ and non-diagonal elements equal to ${R_{K,2} = pR_{K,2} +qR_{K-1,1} + (N-2)qR_{K-1,2}}$.
given as
    \begin{equation}
        \begin{array}{ll}
             \hspace{-4mm}P^{K+1} = P\times P^{K} = \begin{bmatrix}
                    R_{K,1} & R_{K,2}&\dots  &R_{K,2}\\
                    R_{K,2} & R_{K,1} &&\vdots\\
                    \vdots & &\ddots & R_{K,2}\\
                    R_{K,2} &\dots & R_{K,2}&R_{K,1}
                \end{bmatrix},
        \end{array}
    \end{equation}
where ${R_{K,1} = pR_{K-1,1} + (N-1)qR_{K-1,2}}$ and ${R_{K,2} = pR_{K,2} +qR_{K-1,1} + (N-2)qR_{K-1,2}}$. 
Now, we calculate $R_{K,1}-R_{K,2}$ to determine which one is larger.
\begin{align}
    %\begin{array}{ll}
        &R_{K,1}-R_{K,2} = pR_{K-1,1} + (N-1)qR_{K-1,2} - pR_{K-1,2} \notag\\&\hspace{24mm}-qR_{K-1,1} - (N-2)qR_{K-1,2}\notag\\
        &\hspace{21mm}= (p-q)R_{K-1,1} -(p-q)R_{K-1,2}.\notag
    %\end{array}
\end{align}
As $p\geq q$ and $R_{K-1,1}\geq R_{K-1,2}$, we can conclude that $R_{K,1}-R_{K,2}\geq 0$, and thus $R_{K,1}\geq R_{K,2}$. That completes the proof.
%\vspace{-0.3cm}
\section{Derivation of the Conditional Probabilities in~(\ref{case3apx})}\label{case3p1}
The conditional probability ${\mathrm{Pr}(E_1\mid\overline{\text{ACK}})}$ is derived as
    \begin{align}\label{case3e1apx}
        %\begin{array}{ll}
             &\hspace{-9mm}\mathrm{Pr}(E_1\mid\overline{\text{ACK}}) = \frac{\mathrm{Pr}(E_1\cap\overline{\text{ACK}})}{\mathrm{Pr}(\overline{\text{ACK}})}\overset{(a)}{=} \frac{\mathrm{Pr}(E_1)}{\mathrm{Pr}(\overline{\text{ACK}})} \notag\\
    &\hspace{12mm}\overset{(b)}{=} \frac{p_s(1-p_f)}{\mathrm{Pr}(\overline{\text{ACK}})}\overset{(c)}{=} \frac{p_s(1-p_f)}{1-p_sp_f},
        %\end{array}
    \end{align}
where ($a$) comes from the fact that the occurrence of event $E_1$ ensures the occurrence of event $\overline{\text{ACK}}$, ($b$) follows since events $E_1$ happen if the status update is delivered successfully which happens with probability $p_s$, and the sent ACK gets lost which happens with probability $1-p_f$, thus ${\mathrm{Pr}(E_1) = p_s(1-p_f)}$, and ($c$) follows the fact that the transmitter receives ACK if a status update is delivered successfully which happens with probability $p_s$, and the sent ACK is delivered successfully which happens with probability $p_f$, thus the probability of receiving ACK is given as ${\mathrm{Pr}({\text{ACK}}) = p_sp_f}$, and thus we have ${\mathrm{Pr}(\overline{\text{ACK}}) =1-\mathrm{Pr}({\text{ACK}})= 1-p_sp_f}$. %It completes the proof.
%\section{Proof of Lemma \ref{lemma2case3}}\label{case3p2}

The conditional probability ${\mathrm{Pr}(E_2\mid\overline{\text{ACK}})}$ is derived as
     \begin{align}\label{case3e2apx}
        %\begin{array}{ll}
              \mathrm{Pr}({E_2}\mid\overline{\text{ACK}}) & = \frac{\mathrm{Pr}(E_2\cap\overline{\text{ACK}})}{\mathrm{Pr}(\overline{\text{ACK}})}\overset{(a)}{=} \frac{\mathrm{Pr}(E_2)}{\mathrm{Pr}(\overline{\text{ACK}})}\notag\\&\overset{b}{=} \frac{1-p_s}{1-p_sp_f},\notag
        %\end{array}
    \end{align}
where ($a$) comes from the fact that the occurrence of event $E_2$ ensures the occurrence of event $\overline{\text{ACK}}$, ($b$) follows since ${\mathrm{Pr}(\overline{\text{ACK}}) = 1-p_sp_f}$, and events $E_2$ happen if the status update is not delivered successfully which happens with probability $1-p_s$, thus $\mathrm{Pr}(E_2) = 1-p_s$. It completes the proof.
%\vspace{-0.3cm}
\section{Proof of Theorem \ref{theo1}}\label{ap_p_th}
%\blue{The blue parts are some suggestions for the red parts (they are not finalized).}

To show the belief-MDP is communicating, it is sufficient to find a randomized policy that induces a recurrent Markov chain, i.e., a policy under which %\red{from any belief-state ${l=(X,b,\rho,X_{-1},f_{-1})\in \mathcal{L}}$ the system can move to any other belief-state ${l'=(X',b',\rho',X_{-1}',f_{-1}')\in \mathcal{L}}$} 
{any belief-state ${l'=(X',b',\rho',X_{-1}',f_{-1}')\in \mathcal{L}}$ can be reached from any other state ${l=(X,b,\rho,X_{-1},f_{-1})\in \mathcal{L}}$ with a positive probability} \cite[Prop. 8.3.1]{puterman1994}. %We consider a policy that decides to transmit or to stay idle at each state with a probability of $0.5$ if the battery's energy level is not zero and to stay idle if the battery's energy level is zero.
%\red{We consider a randomized policy that sends a status updates with a probability of $0.5$ whenever there is energy in the battery, i.e.,} 
{We define the following policy: the transmitter sends a status update, i.e., $a(t)=1$, with probability $0.5$ whenever the battery is not empty, i.e.,}
\begin{equation}
\text{Pr}\{a(t) =1 \}=
\begin{cases}
0.5 &  b(t)>0 \\
0 & b(t)=0.
\end{cases}
\end{equation}

{We consider two cases $b'\ge b$ and $b'< b$. For the case where $b'\ge b$, let us consider the following event: for $b'-b$ consecutive slots the transmitter stays idle (i.e., $a=0$) and it receives energy during each slot (i.e., $e=1$), at the next slot, the source state moves to the state regarding the reset belief associated with $\rho'$, the transmitter sends an update (i.e., $a=1$) and receives ACK, and depending on the values of $f_{-1}'$ and $\rho'$ we consider the following three cases: i) if $f_{-1}'=1$ and $\rho'$ is a reset belief, at the next slot, the source's state moves to $X'$, and the transmitter receives one unit of energy, ii) if $f_{-1}'=0$ and $\rho'$ is a reset belief, at the next two slots, first, the source state moves to $X_{-1}'$ and then moves to $X'$, and the transmitter receives one unit of energy, and iii) if $f_{-1}'=0$ and $\rho'$ is not a reset belief, until moving to $\rho'$, at each slot, the transmitter receives one unit of energy, the transmitter sends an update without receiving ACK, and the source state changes in a way that the belief becomes $\rho'$ (see equation \eqref{case3apx}), then, at the next two slots, first, the source state moves to $X_{-1}'$ and then moves to $X'$, and the transmitter receives one unit of energy.
Since the considered event occurs with a positive probability, any belief-state ${l'=(X',b',\rho',X_{-1}',f_{-1}')\in \mathcal{L}}$ can be reached from any other state ${l=(X,b,\rho,X_{-1},f_{-1})\in \mathcal{L}}$ with a positive probability for $b'\ge b$. For the case $b'<b$, first, the battery is discharged to $b'$ by sending updates for $b-b'$ slots without receiving energy, then a similar event with a positive probability can be provided.}

%\red{We consider the following events in which every event happens with a positive probability. First, the battery's energy level becomes $b'$: if $b\leq b'$, the transmitter stays idle and receives energy for $b'-b$ slots, and if $b> b'$, the transmitter sends updates for $b-b'$ slots without receiving energy. At the next slot, the source state moves to the state regarding the reset belief associated with $\rho'$\footnote{Note that every belief is reached by starting from a specific reset belief and possibly a sequence of transmission of different source states without receiving ACK.}, and the transmitter sends an update and receives ACK. Then, we have three scenarios: 1) $f_{-1}'=1$ and $\rho'$ is a reset belief; at the next slot, the source's state moves to $X'$, and the transmitter receives one unit of energy, 2) $f_{-1}'=0$ and $\rho'$ is a reset belief; at the next two slots, first, the source state moves to $X_{-1}'$ and then moves to $X'$, and the transmitter receives one unit of energy, and 3) $f_{-1}'=0$ and $\rho'$ is not a reset belief; until moving to $\rho'$ at the subsequent slots, the transmitter receives one unit of energy, the transmitter sends an update without receiving ACK, while the source state changes at each slot in a way that the belief becomes $\rho'$. Then, the transmitter receives one unit of energy.}
%\vspace{-0.3cm}
\section{The derivation of \eqref{Eq:expex} }\label{A:expex1}
We start with characterizing ${C(t+1)}$. To this end, first, we introduce the following two auxiliary variables. Let $r(t)\in\{0,1\}$ denote a binary indication of the source state dynamic at slot $t$, where $r(t)=1$ means the source state remains unchanged, and $r(t)=0$ otherwise. Let $y(t)\in\{0,1\}$ denote a binary indicator of the status of the sent packet at slot $t$; $y(t)=1$ means the packet is delivered to the sink, and $y(t)=0$ otherwise. For the sake of brevity, we define ${E=\big(a(t),y(t),r(t+1)\big)}$. The cost for different conditions at the next slot is given as
\begin{align}\label{Eq:expex0}
    &C(t+1) =\notag\\ 
    &\begin{cases}
        d\big(X(t),X(t)\big)&E=\big(1,1,1\big)\\
        \displaystyle\sum_{i=1}^{N}d\big(X(t),i\big)\rho_i(t)&E=(1,0,1)\\
        \displaystyle\sum_{j=1,j\neq X(t)}^{N}d\big(j,X(t)\big)&E=(1,1,0)\\
        \displaystyle\sum_{i=1}^{N}\sum_{j=1,j\neq X(t)}^{N}d(j,i)\rho_i(t)&E=(1,0,0)\\
        \displaystyle\sum_{i=1}^{N}d\big(X(t),i\big)\rho_i(t)&E=(0,0,1)\\
        \displaystyle\sum_{i=1}^{N}\sum_{j=1,j\neq X(t)}^{N}d(j,i)\rho_i(t)&E=(0,0,0).
    \end{cases}
\end{align}

By using \eqref{Eq:expex0}, the conditional expectation of cost at the next slot  ${\mathbb{E}\{C(t+1)\mid l(t)\}}$ is characterized as 
\begin{align}\label{Eq:expex1}
    &\mathbb{E}\{C(t+1)\mid l(t)\} = d\big(X(t),X(t)\big)\mathrm{Pr}\big( E=(1,1,1)   \mid l(t)\big)\notag\\&+\sum_{i=1}^{N}d\big(X(t),i\big)\rho_i(t)\mathrm{Pr}\big( E=(1,0,1)  \mid l(t)\big)\notag\\&+ \sum_{j=1,j\neq X(t)}^{N}d\big(j,X(t)\big)\mathrm{Pr}\big(  E=(1,1,0)  \mid l(t)\big) \notag\\&+ \sum_{i=1}^{N}\sum_{j=1,j\neq X(t)}^{N}\hspace{0mm}d(j,i)\rho_i(t)\mathrm{Pr}\big(  E=(1,0,0)  | l(t)\big) \notag\\&+ \sum_{i=1}^{N}d\big(X(t),i\big)\rho_i(t)\mathrm{Pr}\big(    E=(0,0,1)  \mid l(t)\big)  \notag\\&+  \sum_{i=1}^{N}\sum_{j=1,j\neq X(t)}^{N}d(j,i)\rho_i(t)\mathrm{Pr}\big(  E=(0,0,0)  \mid l(t)\big).
\end{align}
To complete the derivation of \eqref{Eq:expex1}, we need to calculate the conditional probability $\mathrm{Pr}\big(E \mid l(t)\big)$ for different values of $E$. 

The conditional probability $\mathrm{Pr}\big( E\!=\!(1,1,1) \mid l(t)\big)$  is given as
\begin{align}\label{Eq:expex2}
    &\hspace{-1mm}\mathrm{Pr}\big( E\!=\!(1,1,1)\mid l(t)\big)\overset{(a)}{=}\mathrm{Pr}\big(a(t)\!=\!1,y(t)\!=\!1\mid l(t)\big)p,
\end{align}
where ($a$) follows from the fact that the source state dynamic is independent of the transmission decision and packet delivery, and  $\mathrm{Pr}\big(r(t+1)=1\mid l(t)\big)=p$. Next, we calculate the conditional probability ${\mathrm{Pr}\big(a(t)=1,y(t)=1\mid l(t)\big)}$ in \eqref{Eq:expex2}
\begin{align}\label{Eq:expex3}
    &\mathrm{Pr}\big(a(t)=1,y(t)=1\mid l(t)\big) \notag\\&= \mathrm{Pr}\big(y(t)=1\mid a(t)=1, l(t)\big)\mathrm{Pr}\big(a(t)=1\mid l(t)\big) \notag\\&\overset{(a)}{=} p_s\mathrm{Pr}(a(t)=1\mid l(t)) \overset{(b)}{=} p_s\mathbb{E}\{a(t)\mid l(t)\},
\end{align}
where ($a$) follows because ${\mathrm{Pr}\big(y(t)=1\mid a(t)=1, l(t)\big)=p_s}$, and ($b$) comes from the following equality
\begin{align} \label{Eq:expex4}
    &\mathbb{E}\{a(t)\mid l(t)\}=0\mathrm{Pr}(a(t)=0\mid l(t)) + 1\mathrm{Pr}(a(t)=1\mid l(t))\notag\\&= \mathrm{Pr}(a(t)=1\mid l(t)).
\end{align}
By substituting \eqref{Eq:expex3} in \eqref{Eq:expex2}, we have
\begin{align}\label{Eq:expex5}
    &\mathrm{Pr}\big( E=(1,1,1)\mid l(t)\big) =pp_s\mathbb{E}\{a(t)\mid l(t)\}.
\end{align}
The conditional probability $\mathrm{Pr}\big(E \mid l(t)\big)$ for other values of $E$ is derived by following the same steps as for $E=(1,1,1)$.  By substituting the probabilities in \eqref{Eq:expex1} we have
\begin{equation}
    \begin{aligned}
            &\mathbb{E}\{C(t+1)\mid l(t)\} = d\big(X(t),X(t)\big)pp_s\mathbb{E}\{a(t)\mid l(t)\}\notag\\&+\sum_{i=1}^{N}d\big(X(t),i\big)\rho_i(t)p(1-p_s)\mathbb{E}\{a(t)\mid l(t)\}\notag\\&+ \sum_{j=1,j\neq X(t)}^{N}d\big(j,X(t)\big)qp_s\mathbb{E}\{a(t)\mid l(t)\} \notag\\&+ \sum_{i=1}^{N}\sum_{j=1,j\neq X(t)}^{N}\hspace{-0mm}d(j,i)\rho_i(t)q(1\!-\!p_s)\mathbb{E}\{a(t)\mid l(t)\} \notag\\&+ \sum_{i=1}^{N}d\big(X(t),i\big)\rho_i(t)p(1-\mathbb{E}\{a(t)\mid l(t)\})  \notag\\&+  \sum_{i=1}^{N}\sum_{j=1,j\neq X(t)}^{N}d(j,i)\rho_i(t)q(1-\mathbb{E}\{a(t)\mid l(t)\}).
    \end{aligned}
\end{equation}

\bibliographystyle{IEEEtran}
%\begin{spacing}{1.35}
\bibliography{short-conf,short-jour,Main}
%\end{spacing}

\end{document}